\def\eqdef{\stackrel{\rm def}{=} \;}
\def\is{:=}
\def\u{\hbox{\tt u}}
\def\P{{\cal P}}
\def\C{{\cal C}}
\def\F{{\cal F}}
\def\T{{\cal T}}
\def\h{h}
\def\t{t}
\def\SQHT{{\hbox{SQHT}^=}}
\def\SQHTF{{\hbox{SQHT}^=_\F}}
\def\QELF{{\hbox{QEL}^=_\F}}
\def\QEL{\hbox{QEL}}
\newcommand{\tuple}[1]{\langle #1 \rangle}
\newcommand{\V}[1]{\mathbf{#1}}
\newcommand{\tr}[1]{#1^*}
\def\IF{\ \hbox{\tt :-} \ }
\def\ap{\; \# \; }
\def\qed{~\hfill$\Box$}
\newtheorem{definition}{Definition}
\newtheorem{theorem}{Theorem}
\newtheorem{example}{Example}
\newtheorem{lemma}{Lemma}
\newtheorem{proposition}{Proposition}
\newcommand{\replace}[2]{[#1 / #2]}
\begin{document}
\bibliographystyle{acmtrans}

\submitted{21 June 2009}
\revised{}
\accepted{}
\title{Functional Answer Set Programming}

\author[P. Cabalar]
{
Pedro Cabalar \\
Department of Computer Science, \\
University of Corunna, Spain. \\
email: \tt cabalar@udc.es
}


\maketitle

\begin{abstract}

\end{abstract}
\begin{keywords}
Answer Set Programming, Equilibrium Logic, Partial Functions, Functional Logic Programming.
\end{keywords}

\section{Introduction}

\label{sec:intro}

Since its introduction two decades ago~\cite{GL88}, the paradigm of \emph{Answer Set Programming} (ASP)~\cite{MT99} has gradually become one of the most successful and practical formalisms for Knowledge Representation due to its flexibility, expressiveness and current availability of efficient solvers. This success can be easily checked by the continuous and plentiful presence of papers on ASP in the main conferences and journals on Logic Programming, Knowledge Representation and Artificial Intelligence during the last years. The declarative semantics of ASP has allowed many syntactic extensions that have simplified the formalisation of complex domains in different application areas like constraint satisfaction problems, planning or diagnosis.

In this paper we consider one more syntactic extension that is an underlying feature in most application domains: the use of \emph{(partial) evaluable functions}. Most ASP programs include some predicates that are nothing else than relational representations of functions from the original domain being modelled. For instance, when modelling the typical educational example of family relationships, we may use a predicate $mother(X,Y)$ to express that $X$'s mother is $Y$, but of course, we must add an additional constraint to ensure that $Y$ is unique wrt $X$, i.e., that the predicate actually acts as the function $mother(X)=Y$. In fact, it is quite common that first time Prolog students use this last notation as their first attempt. Functions are not only a natural element for knowledge representation, but can also simplify in a considerable way ASP programs. Apart from avoiding constraints for uniqueness of value, the possibility of nesting functional terms like in $W=mother(father(mother(X)))$ allows a more compact and readable representation than the relational version $mother(X,Y), father(Y,Z), mother(Z,W)$ involving extra variables, which may easily mean a source of formalisation errors. Similarly, as we will see later, the use of partial functions can also save the programmer from including explicit conditions in the rule bodies to check that the rule head is actually defined.

The addition of functions to ASP is not new at all. In fact, there exist two different ways in which functions are actually understood. The first way of treating functions is followed by most of the existing work in the topic (like the general approaches~\cite{Syr01,Bon04,SE07} or the older use of function $Result$ for Situation Calculus inside ASP~\cite{GL93}). These approaches treat functions in the same way as Prolog, that is, they are just a way for \emph{constructing} the Herbrand universe, and so they satisfy the unique names assumption -- e.g. $mother(john)=mary$ is always false. A second way of treating functions is dealing with them as in Predicate Calculus, as done for instance in \emph{Functional Logic Programming}~\cite{Han94}. The first and most general approach in this direction is due to the logical characterisation of ASP in terms of \emph{Equilibrium Logic}~\cite{Pea96} and, in particular, to its extension to first order theories, \emph{Quantified Equilibrium Logic} (QEL)~\cite{PV04}. As a result of this characterisation, the concept of stable model is now defined for any theory from predicate calculus with equality. In fact, stable models can be alternatively described by a second-order logic operator~\cite{FLL07} quite close to Circumscription \cite{McC80}, something that has been already used, for instance, to study strong equivalence for programs with variables~\cite{LPV07}. Another alternative for ASP with (non-Herbrand) functions has been very recently presented in~\cite{LW08} and, as we will show later, can be seen as a particular case of the current approach, when we restrict to total functions.

As we will explain in the next section, we claim that the exclusive use of Herbrand functions and the currently proposed interpretation of equality in QEL or the requirement for functions to be total, as in~\cite{LW08}, yield some counterintuitive results when introducing functions for knowledge representation. In order to overcome these problems, we propose a variation of QEL that separates Herbrand functions (or constructors) from evaluable functions, as also done in logical characterisations~\cite{GHLR99,Rod01,Han07} of \emph{Functional Logic Programming}. We further show how our semantics for partial functions has a direct relation to the \emph{Logic of Existence} (or $E$-logic) proposed by Scott~\cite{Sco79}.

The rest of the paper\footnote{This paper extends~\cite{Cab08} and improves it in many different ways. The most significant are, firstly, the inclusion of Section~\ref{sec:lin} with a complete formal comparison to~\cite{LW08} plus a small discussion on expressiveness. Second, the safety condition has been corrected (some cases dealing with equality were wrong) and improved to cover more cases. Third, all proofs have been completed now and included in an appendix.} is organized as follows. In the next section, we informally consider some examples of knowledge representation with functions in ASP, commenting the apparently expected behaviour and the problems that arise when using the current proposal for QEL. In Section~\ref{sec:eq}, we introduce our variant called $\QELF$. Section~\ref{sec:der} defines some useful derived operators, many of them directly extracted from $E$-logic and showing the same behaviour. In Section~\ref{sec:FLP} we consider a syntactic subclass of logic programs with evaluable functions and Herbrand constants, and show how they can be translated into (non-functional) normal logic programs afterwards. This includes a definition of safety that guarantees that the final translation results in a safe program, something crucial for the current ASP grounders. Section~\ref{sec:lin} establishes a formal comparison showing how~\cite{LW08} can be encoded into our functional logic programs by forcing functions to be total, and also includes a discussion showing that $\QELF$ is more suitable for nonmontonic reasoning with functions. Finally, Section~\ref{sec:rel} contains a brief discussion about other related work and Section~\ref{sec:conc} concludes the paper.

\section{A Motivating Example}
\label{sec:mot}

Consider the following simple scenario with a pair of rules.
\begin{example}
\label{ex:1}
When deciding the second course of a given meal once the first course is fixed, we want to apply the following criterion: on Fridays, we repeat the first course as second one; the rest of week days, we choose $fish$ if the first was $pasta$.\qed
\end{example}
A straightforward encoding of these rules\footnote{As a difference wrt to the typical ASP notation, we use $\neg$ to represent default negation and, instead of a comma, we use $\wedge$ to separate literals in the body.} into ASP would correspond to the program $\Pi_1$:
\begin{eqnarray}
second(fish) & \leftarrow & first(pasta) \wedge \neg friday \label{f:meal1}\\
second(X) & \leftarrow & first(X) \wedge friday \label{f:meal2}\\
\bot & \leftarrow & first(X) \wedge first(Y) \wedge X\neq Y \\
\bot & \leftarrow & second(X) \wedge second(Y) \wedge X\neq Y \hspace{15pt}
\end{eqnarray}
\noindent where the last two rules just represent that each course is unique, i.e., $first(salad)$ and $first(pasta)$ cannot be simultaneously true, for instance. In fact, these constraints immediately point out that $first$ and $second$ are 0-ary functions. A very naive attempt to use these functions for representing our example problem could be the pair of formulas $\Pi_2$:
\begin{eqnarray}
second=fish & \leftarrow & first=pasta \wedge \neg friday \label{f:mealb1}\\
second=first & \leftarrow & friday \label{f:mealb2}
\end{eqnarray}

\noindent Of course, $\Pi_2$ is not a logic program, but it can still be given a logic programming meaning by interpreting it under Herbrand models of QEL, or the equivalent recent characterisation of stable models for first order theories~\cite{FLL07}. Unfortunately, the behaviour of $\Pi_2$ in QEL with Herbrand models (and decidable equality) will be quite different to that of $\Pi_1$ by several reasons that can be easily foreseen. First of all, there exists now a qualitative difference between functions $first$ and $second$ with respect to $fish$ and $pasta$. For instance, while it is clear that $fish=pasta$ must be false, we should allow $second=first$ to cope with our Fridays criterion. If we deal with Herbrand models or unique names assumption, the four constants would be pairwise different and  (\ref{f:mealb1}) would be equivalent to $\bot \leftarrow \bot$, that is, a tautology, whereas (\ref{f:mealb2}) would become the constraint $\bot \leftarrow friday$. 

Even after limiting the unique names assumption only to constants $fish$ and $pasta$, new problems arise. For instance, the approaches in~\cite{PV04,FLL07,LPV07,LW08} deal with complete functions and the axiom of \emph{decidable equality}:
\begin{align}
x=y \vee \neg (x=y) \tag{DE}
\end{align}
\noindent This axiom is equivalent to $x=y \leftarrow \neg \neg (x=y)$ which informally implies that we always have a justification to assign any value to any function. 
Thus, for instance, if it is not Friday and we do not provide any information about the first course, i.e., no atom $first(X)$ holds, then $\Pi_1$ will not derive any information about the second course, that is, no atom $second(X)$ is derived. In $\Pi_2$, however, functions $first$ and $second$ must \emph{always} have a value, which is further justified in any stable model by (DE). As a result, we get that a possible stable model is, for instance, $first=fish$ and $second=pasta$. A related problem of axiom (DE) is that it allows rewriting a rule like (\ref{f:mealb1}) as the constraint:
\begin{eqnarray*}
\bot & \leftarrow & first=pasta \wedge \neg friday \wedge \neg (second=fish)
\end{eqnarray*}
\noindent whose relational counterpart would be
\begin{eqnarray}
\bot & \leftarrow & first(pasta) \wedge \neg friday \wedge  \neg second(fish) \ \ \label{f:meal1bis}
\end{eqnarray}
\noindent and whose behaviour in logic programming is very different from the original rule (\ref{f:meal1}). As an example, while $\Pi_1 \cup \{first(pasta)\}$ entails $second(fish)$, the same program after replacing (\ref{f:meal1}) by (\ref{f:meal1bis}) has no stable models.

Finally, even after removing decidable equality, we face a new problem that has to do with directionality in the equality symbol when used in the rule heads. The symmetry of `=' allows rewriting (\ref{f:mealb2}) as:
\begin{eqnarray}
first=second & \leftarrow & friday \label{f:mealb2bis}
\end{eqnarray}
\noindent that in a relational notation would be the rule:
\begin{eqnarray}
first(X) & \leftarrow & second(X) \wedge friday \label{f:meal2bis}
\end{eqnarray}
\noindent which, again, has a very different meaning from the original (\ref{f:meal2}). For instance $\Pi_1 \cup \{friday, second(fish)\}$ does not entail anything about the first course, whereas if we replace in this program (\ref{f:meal2}) by (\ref{f:meal2bis}), we obtain $first(fish)$. This is counterintuitive, since our program was intended to derive facts about the second course, and not about the first one. To sum up, we will need some kind of new directional operator to specify the function value in a rule head.

\section{Quantified Equilibrium Logic with Evaluable Functions}
\label{sec:eq}

The definition of propositional Equilibrium Logic~\cite{Pea96} relied on establishing a selection criterion on models of the intermediate logic, called the logic of \emph{Here-and-There} (HT)~\cite{Hey30}. The first order case~\cite{PV04} followed similar steps, introducing a quantified version of HT, called $\SQHT$ that stands for \emph{Quantified HT with static domains\footnote{The term \emph{static domain} refers to the fact that the universe is shared among all worlds in the Kripke frame.} and equality}.  In this section we describe the syntax and semantics of a variant, called $\SQHTF$, for dealing with evaluable functions. 

We begin by defining a first-order language by its \emph{signature}, a tuple $\Sigma=\tuple{\C,\F,\P}$ of disjoint sets where $\C$ and $\F$ are sets of \emph{function names} and $\P$ a set of \emph{predicate names}. We assume that each function (resp. predicate) name has the form $f/n$ where $f$ is the function (resp. predicate) symbol, and $n\geq 0$ is an integer denoting the number of arguments (or \emph{arity}). Elements in $\C$ will be called \emph{Herbrand functions} (or \emph{constructors}), whereas elements in $\F$ will receive the name of \emph{evaluable\footnote{In~\cite{Han07}, elements of $\F$ are called \emph{defined functions} instead -- we avoid this terminology because it could be mistakenly understood as the opposite of being \emph{undefined} or partial.} functions} (or \emph{operations}). The sets $\C_0$ (Herbrand constants) and $\F_0$ (evaluable constants) respectively represent the elements of $\C$ and $\F$ with arity $0$. We assume $\C_0$ contains at least one element.

First-order formulas are built up in the usual way, with the same syntax of classical predicate calculus with equality $=$. We assume that $\neg \varphi$ is defined as $\varphi \rightarrow \bot$ whereas $x \neq y$ just stands\footnote{We hope that, depending on the context, the reader will be aware of the different use of symbols `=' and `$\neq$' as formulas in the language from their standard use in the semantic metalanguage.} for $\neg (x=y)$. An atom like $t=t'$ is called an \emph{equality atom}, whereas an atom like $p(t_1,\dots,t_n)$ with $n\geq 0$ for any predicate $p$ different from equality receives the name of \emph{predicate atom}. Given any set of functions ${\cal A}$ we write $Terms({\cal A})$ to stand for the set of ground terms built from functions (and constants) in ${\cal A}$. In particular, the set of all possible ground terms for signature $\Sigma=\tuple{\C,\F,\P}$ would be $Terms(\C \cup \F)$ whereas the subset $Terms(\C)$ will be called the \emph{Herbrand Universe} of $\cal L$. The \emph{Herbrand Base} $HB(\C,\P)$ is a set containing all atoms that can be formed with predicates in $\P$ and terms in the Herbrand Universe, $Terms(\C)$.

From now on, we assume that all free variables are implicitly universally quantified. We use letters $x, y, z$ and their capital versions to denote variables, $t$ to denote terms, and letters $c, d$ to denote ground terms. Boldface letters like $\V{x}, \V{t}, \V{c}, \dots$ represent tuples (in this case of variables, terms and ground terms, respectively). The corresponding semantics for $\SQHTF$ is described as follows. 

\begin{definition}[state]
\label{def:state}
A \emph{state} for a signature $\Sigma=\tuple{\C,\F,\P}$ is a pair $(\sigma,A)$ where $A \subseteq HB(\C,\P)$ is a set of atoms from the Herbrand Base and \\
$\sigma : Terms(\C \cup \F) \rightarrow Terms(\C) \cup \{ \u \}$ is a function assigning to any ground term in the language some ground term in the Herbrand Universe or the special value $\u \not\in Terms(\C \cup \F)$ (standing for \emph{undefined}). Function $\sigma$ must satisfy:
\begin{enumerate}
\item[\rm (i)] $\sigma(c) = c$ for all $c \in Terms(\C)$.
\item[\rm (ii)] $\sigma(f(t_1,\dots,t_n)) = \left\{
\begin{array}{ll}
\u & \hbox{if } \sigma(t_i)=\u \ \hbox{for some } i=1\dots n \\
\sigma(f(\sigma(t_1),\dots,\sigma(t_n))) & \hbox{otherwise} 
\end{array}
\right.$
\end{enumerate}
\qed
\end{definition}

As we can see, our domain is exclusively formed by the terms from the Herbrand Universe, $Terms(\C)$. These elements are used as arguments of ground atoms in the set $A$, that collects the \emph{true} atoms in the state. Similarly, the value of any functional term is an element from $Terms(\C)$, excepting the cases in which operations are left undefined (i.e., they are \emph{partial} functions) -- if so, they are assigned the special element $\u$ (different from any syntactic symbol) instead. Condition (i) asserts, as expected, that any term $c$ from the Herbrand Universe has the fixed valuation $\sigma(c)=c$. Condition (ii) establishes two important restrictions. On the one hand, it guarantees that a functional term with an undefined argument becomes undefined in its turn\footnote{Using Functional Logic Programming terminology, this means that functions are \emph{strict}, as opposed to non-strict functions with lazy evaluation.}. On the other hand, Condition (ii) also guarantees that functions preserve their interpretation through subterms -- for instance, if we have $\sigma(f(a))=c$ we expect that $\sigma(g(f(a))$ and $\sigma(g(c))$ coincide. It is easy to see that (ii) implies that $\sigma$ is completely determined by the values it assigns to all terms like $f(\V{c})$ where $f$ is any operation and $\V{c}$ a tuple of elements in $Terms(\C)$.

\begin{definition}[Ordering $\preceq$ among states]
\label{def:order}
We say that state $S=(\sigma,A)$ is \emph{smaller} than state $S'=(\sigma',A')$, written $S \preceq S'$, when both:
\begin{enumerate} 
\item[i)] $A \subseteq A'$. 
\item[ii)] $\sigma(d)=\sigma'(d)$ or $\sigma(d)=\u$, for all $d \in Terms(\C \cup \F)$.
\qed
\end{enumerate}
\end{definition}
We write $S \prec S'$ when the relation is strict, that is, $S\preceq S'$ and $S \neq S'$. The intuitive meaning of $S \preceq S'$ is that the former contains \emph{less information} than the latter, so that any true atom or defined function value in $S$ must hold in $S'$.

\begin{definition}[$HT$-interpretation]
An $HT$ \emph{interpretation} $I$ for a signature $\Sigma=\tuple{\C,\F,\P}$ is a pair of states $I=\tuple{S^\h,S^\t}$ with $S^\h \preceq S^\t$.\qed
\end{definition}

The superindices $\h,\t$ represent two worlds (respectively standing for \emph{here} and \emph{there}) with a reflexive ordering relation further satisfying $\h \leq \t$. An interpretation like $\tuple{S^\t,S^\t}$ is said to be \emph{total}, referring to the fact that both states contain the same information\footnote{Note that by \emph{total} we do not mean that functions cannot be left undefined. We may still have some term $d$ for which $\sigma^\t(d)=\u$.}.

Given an interpretation
$I=\tuple{S^\h, S^\t}$, with $S^\h=(\sigma^\h,I^\h)$ and $S^\t=(\sigma^\t,I^\t)$, we define when $I$ \emph{satisfies} a formula $\varphi$ at some world $w\in \{h,t\}$, written $I,w \models \varphi$, inductively as follows:
\begin{itemize}
\item $I,w \models p(t_1,\dots,t_n)$ if $p( \sigma^w(t_1), \dots, \sigma^w(t_n) ) \in I^w$;
\item $I,w \models t_1=t_2$ if $\sigma^w(t_1)=\sigma^w(t_2) \neq \u$;
\item $I,w \not\models \bot$; $I,w \models \top$;
\item $I,w \models \alpha \wedge \beta$ if $I,w \models \alpha$ and $I,w \models \beta$;
\item $I,w \models \alpha \vee \beta$ if $I,w \models \alpha$ or $I,w \models \beta$;
\item $I,w \models \alpha \rightarrow \beta$ if for all $w'\geq w$: $I,w' \not\models \alpha$ or $I,w' \models \beta$;
\item $I,w \models \forall x \ \alpha(x)$ if for each $w'\geq w$ and each $c \in Terms(\C)$: $I,w' \models \alpha(c)$;
\item $I,w \models \exists x \ \alpha(x)$ if for some $c \in Terms(\C)$: $I,w \models \alpha(c)$.
\qed
\end{itemize}

An important observation is that the first condition above implies that an atom with an undefined argument will always be valuated as false since, by definition, $\u$ never occurs in ground atoms of $I^\h$ or $I^\t$. Something similar happens with equality: $t_1 = t_2$ will be false if any of the two operands, or even both, are undefined. As usual, we say that $I$ is a \emph{model} of a formula $\varphi$, written $I \models \varphi$, when $I,h \models \varphi$.  Similarly, $I$ is a \emph{model} of a theory $\Gamma$ when it is a model of all of its formulas.

From the definition of $\neg$ as derived operator, we can easily check that:
\begin{proposition}\label{prop:neg}
$I,w \models \neg \varphi$ iff $I,t \not\models \varphi$.\qed
\end{proposition}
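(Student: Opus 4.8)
The plan is to unfold the abbreviation $\neg\varphi \equiv \varphi\rightarrow\bot$ and reduce the statement to the persistence (truth–monotonicity) property of $\SQHTF$. By the satisfaction clause for implication, $I,w\models\neg\varphi$ holds iff for every $w'\geq w$ we have $I,w'\not\models\varphi$ or $I,w'\models\bot$; since $I,w'\not\models\bot$ always, this is just: $I,w'\not\models\varphi$ for all $w'\geq w$. I would then split on $w$. The frame has only the two reflexively ordered worlds $\h\leq\t$, so when $w=\t$ the only world above is $\t$ itself and the equivalence $I,\t\models\neg\varphi$ iff $I,\t\not\models\varphi$ is immediate. When $w=\h$, the worlds above $\h$ are $\h$ and $\t$, so $I,\h\models\neg\varphi$ iff ($I,\h\not\models\varphi$ and $I,\t\not\models\varphi$); the implication from this conjunction to $I,\t\not\models\varphi$ is trivial, and the converse is exactly the contrapositive of: if $I,\h\models\varphi$ then $I,\t\models\varphi$.

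So the real content is this persistence lemma, which I would establish by structural induction on $\varphi$ using $S^\h\preceq S^\t$. The atomic cases are where Definition~\ref{def:order} genuinely enters: if $I,\h\models p(t_1,\dots,t_n)$ then $p(\sigma^\h(t_1),\dots,\sigma^\h(t_n))\in I^\h$, so no $\sigma^\h(t_i)$ equals $\u$, whence clause (ii) of $\preceq$ forces $\sigma^\h(t_i)=\sigma^\t(t_i)$ for each $i$, and clause (i), $I^\h\subseteq I^\t$, yields $I,\t\models p(t_1,\dots,t_n)$; the equality atom case is analogous, since $\sigma^\h(t_1)=\sigma^\h(t_2)\neq\u$ again forces $\sigma^\h$ and $\sigma^\t$ to agree on $t_1$ and $t_2$. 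The cases $\top$, $\bot$ are trivial; $\wedge$, $\vee$, $\exists$ follow from the induction hypothesis; and for $\rightarrow$ and $\forall$ no hypothesis is even needed, as the clause defining satisfaction at $\h$ quantifies over $\{\h,\t\}$, which already contains the world set $\{\t\}$ used at $\t$. Feeding persistence back into the case analysis of the first paragraph closes the proof.

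The only delicate point is the atomic step: one has to notice that a predicate or equality atom is satisfied only when all the terms involved evaluate to something other than $\u$, and it is precisely this definedness that lets clause (ii) of $\preceq$ be strengthened from ``possibly undefined'' to ``equal'', so that $\sigma^\h$ and $\sigma^\t$ coincide on exactly the terms that matter. Everything else is a mechanical induction, which is why the paper can call it an easy check.
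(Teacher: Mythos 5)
Your proof is correct and follows exactly the route the paper intends when it says the claim "can easily be checked" from $\neg\varphi \eqdef \varphi\rightarrow\bot$: unfold the implication clause and invoke persistence from $\h$ to $\t$, which you rightly identify as the only nontrivial content and establish by induction, with the atomic cases correctly exploiting that satisfaction of a predicate or equality atom forces definedness and hence, via clause (ii) of $\preceq$, agreement of $\sigma^\h$ and $\sigma^\t$ on the relevant terms. The paper omits the argument entirely, so your write-up simply supplies the details it leaves to the reader.
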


Nonmonotonicity is obtained by the next definition, which introduces the idea of equilibrium models for $\SQHTF$.

\begin{definition}[Equilibrium model]
A model $\tuple{S^t,S^t}$ of a theory $\Gamma$ is an \emph{equilibrium model} if there is no strictly smaller state $S^h \prec S^t$ that $\tuple{S^h,S^t}$ is also model of $\Gamma$. \qed
\end{definition}

The Quantified Equilibrium Logic with evaluable functions ($\QELF$) is the logic induced by the $\SQHTF$ equilibrium models. 

For space reasons we describe $\SQHT$ (resp. QEL) as a particular instance of $\SQHTF$ (resp. $\QELF$). It can be easily checked that this description is equivalent to the one in~\cite{PV08}. The syntax for $\SQHT$ is the same as for $\SQHTF$ (that is, Predicate Calculus with equality) but starting from a signature $\tuple{\F,\P}$ where no distinction is made among functions in set $\F$. Each $\SQHT$ interpretation for signature $\tuple{\F,\P}$ further deals with a universe domain, a set $U\neq \emptyset$ which is said to be \emph{static}, that is, common to both worlds $\h$ and $\t$. To capture this in $\SQHTF$ we can just use signature $\tuple{\C,\F,\P}$ and define $\C$ as a set of constant names, one $c'$ per each individual $c \in U$. The most important feature of $\SQHT$ interpretations is that they satisfy the axiom $t=t$ for any term $t$. In other words, any ground term $d \in Terms(\C \cup \F)$ is defined $\sigma^\h(d) \neq \u$ and, in fact, by construction of interpretations, this also means $\sigma^\h(d)=\sigma^\t(d)$. As a result, $\hbox{SQHT}^=$ actually uses a unique $\sigma$ function for both worlds $\h$ and $\t$ and interpretations can be represented instead as $\tuple{\sigma,I^\h,I^\t}$. Under this restriction, it is easy to see that decidable equality $t_1=t_2 \vee t_1\neq t_2$ is a valid formula.

Herbrand models from $\hbox{SQHT}^=$ and signature $\tuple{\C,\P}$ can be easily captured by just considering $\SQHTF$ interpretations for signature $\tuple{\C,\emptyset,\P}$. Finally, the models selection criterion in the definition of equilibrium models need not be modified. Since $\sigma^\h=\sigma^t=\sigma$ and all terms and defined, the $\preceq$ ordering relation among states in QEL actually amounts to a simple inclusion of sets of ground atoms.


\section{Useful Derived Operators}
\label{sec:der}

From the $\SQHTF$ semantics, it is easy to see that the formula $(t=t)$, usually included as an axiom for equality, is not valid in $\SQHTF$. In fact, $I,w \models (t=t)$ iff $\sigma^w(t)\neq \u$, that is, term $t$ is defined. In this way, we can introduce Scott's~\cite{Sco79} \emph{existence} operator\footnote{Contrarily to the original Scott's $E$-logic, variables in $\SQHTF$ are always defined. This is not an essential difference: terms may be left undefined instead, and so most theorems, like $(x=y)\rightarrow(y=x)$ are expressed here using metavariables for terms $(t_1=t_2) \rightarrow (t_2=t_1)$.} in a standard way: $
E \ t \eqdef (t=t)$.
Condition (ii) in Definition~\ref{def:state} implies the \emph{strictness} condition of $E$-logic, formulated by the axiom $E \ f(t) \rightarrow E \ t$. As happens with $(t=t)$, the substitution axiom for functions:
\begin{eqnarray*}
t_1=t_2 \rightarrow f(t_1)=f(t_2)
\end{eqnarray*}
\noindent is not valid, since it may be the case that the function is undefined. However, the following weaker version is an $\SQHTF$ tautology:
\begin{eqnarray*}
t_1=t_2 \wedge E \ f(t_1) & \rightarrow & f(t_1)=f(t_2)
\end{eqnarray*}
Usual axioms for equality that are valid in $\SQHTF$ are, for any predicate $P$:
\begin{eqnarray*}
t_1=t_2 & \rightarrow & t_2=t_1\\
t_1=t_2 \wedge t_2=t_3 & \rightarrow & t_1=t_3\\
t_1=t_2 \wedge P(t_1) & \rightarrow & P(t_2)
\end{eqnarray*}

At this point, it is perhaps convenient to introduce a few terms to talk about particular types of functions. We say that an evaluable function $f$ is \emph{decidable} under a given interpretation $I$, when $I$ satisfies the excluded middle axiom:
\begin{eqnarray}
f(\V{t})=t' \vee f(\V{t}) \neq t' \label{f:EMf}
\end{eqnarray}
\noindent and we say that $f$ is \emph{decidable} in a given theory when it is decidable under any of its models. The models of \eqref{f:EMf} correspond to interpretations where $\sigma^h(f(\V{c}))=\sigma^t(f(\V{c}))$ for any tuple of elements $\V{c} \in Terms(\C)$, that is, function $f$ has the \emph{same} interpretation in both worlds, and so, it somehow behaves ``classically.'' As example of decidable functions, think about integer arithmetic operations like +, -, $\times$ or $\div$, for which we expect a fixed interpretation in worlds $h$ and $t$ with their usual meaning. Of course, an evaluable function is always decidable under any equilibrium model $I$ of a theory $\Gamma$, since $\sigma^h=\sigma^t$ in that case, but this does not necessarily mean that \eqref{f:EMf} holds for all $\SQHTF$ models of $\Gamma$.

A function is said to be \emph{total} under an interpretation $I$ when $I$ satisfies:
\begin{eqnarray}
E \ \V{t} \rightarrow E \ f(\V{t}) \label{f:totalf}
\end{eqnarray}
\noindent and called \emph{partial} under $I$ otherwise. We say that a decidable function $f$ is \emph{total} in a theory $\Gamma$ if it is total under any of its models; otherwise it is \emph{partial} in $\Gamma$. Semantically, total functions satisfy $\sigma^w(f(\V{c})) \neq \u$ for any tuple of terms $\V{c} \in Terms(\C)$ and any world $w$, while partial functions no. From this, and the $\SQHTF$ semantics, it can be observed that \eqref{f:totalf} actually implies \eqref{f:EMf}, that is, a total function is always decidable. The opposite does not necessarily hold. Back to the example, under their usual interpretation, +, - and $\times$ are total functions, whereas $\div$ is partial although, as we said before, still decidable; in particular the formula $E \ x \rightarrow E \ x \div 0$, is always false, since a variable $x$ is always defined, $E \ x$, whereas $\sigma^w(x \div 0)=\u$. Again, a function may be total under equilibrium models of a theory $\Gamma$ without being so under any $\SQHTF$ model of $\Gamma$.

Similarly to these distinctions among types of functions, we have now several types of equalities and inequalities. In $E$-logic there exists a second and weaker equality (which Scott called \emph{equivalence}) that can be defined as $t_1 \equiv t_2 \eqdef (E \ t_1 \vee E \ t_2) \rightarrow t_1=t_2$. In other words, $t_1$ and $t_2$ have the same defined value, provided that any of them is defined. This equality is perhaps not so interesting for knowledge representation and its inclusion in logic programs, but may have a crucial importance when studying properties of programs, like for instance, strongly equivalent\footnote{Two theories $\Gamma_1, \Gamma_2$ are said to be \emph{strongly equivalent} when, for any theory $\Gamma$, the equilibrium models of $\Gamma_1 \cup \Gamma$ and $\Gamma_2 \cup \Gamma$ coincide.} transformations. In this sense, $t_1 \equiv t_2$ means that $t_1$ can be replaced by $t_2$ and vice versa, that is, they have the \emph{same behaviour}. For instance, the following valid formula:
\begin{eqnarray*}
t_1 \equiv t_2 \rightarrow f(t_1) \equiv f(t_2)
\end{eqnarray*}
\noindent allows us replacing $f(t_1)$ by $f(t_2)$ when we know that $t_1$ can be replaced by $t_2$. Note that, if we just use equality for that purpose
\begin{eqnarray*}
t_1 = t_2 \rightarrow f(t_1) = f(t_2)
\end{eqnarray*}
\noindent we would be forcing $f$ to become a total function, since taking $t_2$ to be $t_1$ above we actually get $E \ t_1 \rightarrow E \ f(t_1)$, and this is not what we want. To understand the difference, note that $t=0 \rightarrow (1 \div t)=(1 \div 0)$ is always false because $(1 \div 0)$ is undefined, whereas $t \equiv 0 \rightarrow (1 \div t) \equiv (1 \div 0)$ is valid.

To represent the \emph{difference} between two terms, we may also have several alternatives. The straightforward one is just $\neg (t_1=t_2)$, or abbreviated $t_1 \neq t_2$. However, this formula can be satisfied when any of the two operands is undefined. We may sometimes want to express a stronger notion of difference that behaves as a positive formula (this is usually called \emph{apartness} in the intuitionistic literature~\cite{Hey56}). In our case, we are especially interested in an apartness operator $t_1 \ap t_2$ where both arguments are required to be defined:
\begin{eqnarray*}
t_1 \ap t_2 & \eqdef & E \ t_1 \wedge E \ t_2 \wedge \neg (t_1=t_2)
\end{eqnarray*}
\noindent The semantic effect of this operator is that $I,w \models t_1 \ap t_2$ iff $\sigma^w(t_1) \neq \u$, $\sigma^w(t_2) \neq \u$ and $\sigma^w(t_1) \neq \sigma^w(t_2)$. To understand its meaning, consider the difference between $\neg (King(France)=LouisXIV)$ and $King(Spain)Ê\ap LouisXIV$. The first expression means that we cannot prove that the King of France is Louis XIV, what includes the case in which France has not a king. The second expression means that we can prove that the King of Spain (and so, such a concept exists) is not Louis XIV.

The next operator we introduce has to do with definedness of rule heads in logic programs (as far as we know, it has not been considered in the literature). The inclusion of a formula in the consequent of an implication may have an undesired effect when thinking about its use as a rule head. For instance, consider the rule $visited(next(x)) \leftarrow visited(x)$ and assume we have the fact $visited(1)$ but there is no additional information about $next(1)$. We would expect that the rule above does not yield any particular effect on $next(1)$. Unfortunately, as $visited(next(1))$ must be true, the function $next(1)$ must become defined and, as a collateral effect, it will be assigned some arbitrary value, say $next(1)=10$ so that $visited(10)$ is made true. To avoid this problem, we will use a new operator $\IF$ to define a different type of implication where the consequent is only forced to be true when all the functional terms that are ``necessary to build'' the atoms in the consequent are defined. Given a term $t$ we define its set of \emph{structural arguments} $Args(t)$ as follows:
\begin{itemize}
\item $Args(t) \eqdef \{t_1,\dots,t_n\}$ if $t$ has the form $f(t_1,\dots,t_n)$ for any evaluable function $f/n \in \F$. 
\item $Args(t) \eqdef t$ otherwise. 
\end{itemize}

\noindent We extend this definition for any atom $A$, so that its set of structural arguments $Args(A)$ corresponds to:
\begin{eqnarray*}
Args(P(t_1,\dots,t_n)) & \eqdef & \{t_1,\dots,t_n\} \\
Args(t = t') & \eqdef & Args(t) \cup Args(t')
\end{eqnarray*}
\noindent In our previous example, $Args(visited(next(x)))=\{next(x)\}$. Notice that, for an equality atom $t = t'$, we do not consider $\{t,t'\}$ as arguments as we have done for the rest of predicates, but go down one level instead, considering $Args(t) \cup Args(t')$ in its turn. For instance, if $A$ is the atom $friends(mother(x),mother(y))$, then $Args(A)$ would be $\{mother(x),mother(y)\}$, whereas for an equality atom $A'$ like $mother(x)=mother(y)$, $Args(A')=\{x,y\}$. We define $[\varphi]$ as the result of replacing each atom $A$ in $\varphi$ by the conjunction of all $E \ t \rightarrow A$ for each $t \in Args(A)$. We can now define the new implication operator as follows $\varphi \IF \psi \eqdef \psi \rightarrow [\varphi]
$. Back to the example, if we use now $visited(next(x)) \IF visited(x)$ we obtain, after applying the previous definitions, that it is equivalent to:
\begin{eqnarray*}
& & visited(x) \rightarrow [ visited(next(x))Ê]Ê\\
& \leftrightarrow & visited(x) \rightarrow (E \ next(x) \rightarrow visited(next(x)))Ê\\
& \leftrightarrow & visited(x) \wedge E \ next(x) \rightarrow visited(next(x))
\end{eqnarray*}

Another important operator will allow us to establish a direction in a rule head assignment -- remember the discussion about distinguishing between (\ref{f:mealb2}) and (\ref{f:mealb2bis}) in Section~\ref{sec:mot}. We define this \emph{assignment} operator as follows:
\begin{eqnarray*}
f(\V{t}) \is t' & \eqdef & E \ t' \rightarrow f(\V{t})=t'
\end{eqnarray*}
\noindent Now, our Example~\ref{ex:1} would be encoded with the pair of formulas:
\begin{eqnarray*}
second \is fish \IF first=pasta \wedge \neg friday \hspace{30pt} 
second \is first \IF friday 
\end{eqnarray*}
\noindent that, after some elementary transformations, lead to:\vspace{-5pt}
\begin{eqnarray*}
second=fish \leftarrow first=pasta \wedge \neg friday \\
second=first \leftarrow E \ first \wedge friday 
\end{eqnarray*}

Using these operators, a compact way to fix a default value $t'$ for a function $f(\V{t})$ would be $f(\V{t}) := t' \IF \neg (f(\V{t}) \ap t')$. Finally, we introduce a nondeterministic choice assignment with the following set-like expression:
\begin{eqnarray}
f(\V{t}) \in  \{ x \ | \ \varphi(x) \} \label{f:set}
\end{eqnarray}
\noindent where $\varphi(x)$ is a formula (called the set \emph{condition}) that contains the free variable $x$. The intuitive meaning of~(\ref{f:set}) is self-explanatory. As an example, the formula $
a \in \{ x \ | \ \exists y \ Parent(x,y) \}$ means that $a$ should take a value among those $x$ that are parents of some $y$. Expression~(\ref{f:set}) is defined as the conjunction of:
\begin{eqnarray}
\forall x \ (\varphi(x) \rightarrow f(\V{t})=x \vee f(\V{t})\neq x ) \ \label{f:set1} \\
\neg \exists x \ (\varphi(x) \wedge f(\V{t})=x) \rightarrow \bot \label{f:set2}
\end{eqnarray}

Other typical set constructions can be defined in terms of~(\ref{f:set}):
\begin{eqnarray*}
f(\V{t}) \in \{t'(\V{y}) \ | \ \exists \V{y} \ \varphi(\V{y}) \} & \eqdef & f(\V{t}) \in \{x \ | \ \exists \V{y} \ (\varphi(\V{y}) \wedge t'(\V{y})=x) \} \\
f(\V{t}) \in \{t'_1,\dots,t'_n\} & \eqdef & f(\V{t}) \in \{ x \ | \ t'_1=x \vee \dots \vee t'_n=x\}
\end{eqnarray*}

It must be noticed that variable $x$ in~(\ref{f:set}) is not free, but implicitly quantified and local to this expression. Note that $\varphi(x)$ may contain other quantified and/or free variables. For instance, observe the difference between:
\begin{eqnarray}
Person(y) \rightarrow a(y) \in \{ x \ | \ Parent(x,y) \} \label{f:ex1}\\
Person(y) \rightarrow a(y) \in \{ x \ | \ \exists y \ Parent(x,y) \} \label{f:ex2}
\end{eqnarray}
\noindent In (\ref{f:ex1}) we assign, per each person $y$, one of her parents to $a(y)$, whereas in (\ref{f:ex2}) we are assigning \emph{any} parent as, in fact, we could change the set condition to $\exists z \ Parent(x,z)$.

At a first sight, it could seem that the formula $\exists x (\varphi(x) \wedge f(\V{t})=x)$ could capture the expected meaning of $f(\V{t}) \in  \{ x \ | \ \varphi(x) \}$ in a more direct way. Unfortunately, such a formula would not ``pick'' a value $x$ among those that satisfy $\varphi(x)$. For instance, if we translate $
a \in \{ x \ | \ \exists y \ Parent(x,y) \}$ as $\exists x (\exists y \ Parent(x,y) \wedge a=x)$ would allow the free addition of facts for $Parent(x,y)$. Notice also that a formula like $a \in \{t\}$ is stronger than an assignment $a \is t$ since when $t$ is undefined, the former is always false, regardless the value of $a$ (it would informally correspond to an expression like $a \in \emptyset$).

\section{Logic Programs with Evaluable Functions}
\label{sec:FLP}

In this section we consider a subset of $\QELF$ which corresponds to a certain kind of logic programs that allow evaluable functions but not constructors other than a finite set of Herbrand constants $\C=\C_0$. The interest of this syntactic class is that it can be translated into ground normal logic programs, and so, equilibrium models can be computed by any of the currently available answer set provers. From now on, we assume that any function $f/n$ with arity $n>0$ is evaluable, $f/n \in \F$, and any constant $c$ is a constructor, $c \in \C$, unless we include a declaration $c/0 \in \F$. As usual in logic programming notation, we use in this section capital letters to represent variables.

In what follows we will use the tag `FLP' to refer to functional logic programming definitions, and `LP' to talk about the more restrictive syntax of normal logic programs (without functions). An FLP-\emph{atom} has the form $p(\V{t})$, $t_1=t_2$ or $t_1 \ap t_2$, where $p$ is a predicate name, $\V{t}$ a tuple of terms and $t_1, t_2$ a pair of terms. An FLP-\emph{literal} is an FLP-atom $A$ (called \emph{positive} literal) or its default negation $\neg A$ (called \emph{negative} literal). We call LP-\emph{terms} (resp. LP-\emph{atoms}, resp. LP-\emph{literals}) to those not containing evaluable function symbols.

\begin{definition}[FLP-rule]\label{def:flprule}
An FLP-\emph{rule} is an implication $\alpha \IF \beta$ where $\beta$ (called \emph{body}) is a conjunction of literals, and $\alpha$ (called \emph{head}) has the form of one the following expressions:
\begin{itemize}
\item[(i)] an FLP-atom $p(\V{t})$;
\item[(ii)] the truth constant $\bot$;
\item[(iii)] an \emph{assignment} $f(\V{t}) := t'$ with $f \in \F$;
\item[(iv)] or a \emph{choice} like $f(\V{t}) \in \{ x \ | \ \varphi(x) \}$ with $f \in \F$ and $\varphi(x)$ a conjunction of literals. We call $x$ the \emph{choice variable} and $\varphi(x)$ the choice condition.\qed
\end{itemize}
\end{definition}
 
Function $f$ in (iii) and (iv) is called the \emph{head function}. A \emph{choice rule} is a rule with a \emph{choice} head. A \emph{functional logic program} is a set of FLP-rules. The following is an example of a program in FLP syntax:
  
\begin{example}[Hamiltonian cycles]\label{ex:ham}
A \emph{Hamiltonian cycle} is a cyclic path in a graph that visites all its nodes exactly once. We encode this problem using a function $next(X)$ that specifies which is the next node in the path for node $X$, and $visited(X)$ that keeps track of visited nodes. The program $\Pi_{\ref{ex:ham}}$ consists of the following rules:
\begin{eqnarray}
next(X) \in \{Z \ | \ arc(X,Z)\} \IF node(X)  \label{f:ham2}\\
visited(next(0)) \label{f:ham3}\\
visited(next(X)) \IF visited(X) \label{f:ham4}\\
\bot \IF node(X)\wedge \neg visited(X) \label{f:ham5}
\end{eqnarray}
\noindent where we assume we always have some node $0$, we can call the ``initial'' one.\qed
\end{example}

An LP-\emph{rule} is such that its body exclusively contains LP-literals and its head is either $\bot$ or an LP-atom $p(\V{t})$. An LP-\emph{program} is a set of LP-rules. As LP-rules do not contain evaluable functions, any LP-rule $\alpha \IF \beta$ is simply equivalent to $\beta \rightarrow \alpha$. Thus, an LP-program has the form of a (standard) normal logic program with constraints and without evaluable functions. The absence of evaluable functions guarantees that $\QELF$ and $\QEL$ coincide for this kind of program:

\begin{proposition}
$\QELF$ equilibrium models of an LP-program $\Pi$ correspond to $\QEL$ equilibrium models of $\Pi$.\qed
\end{proposition}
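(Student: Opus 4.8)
The plan is to show that for an LP-program $\Pi$ — where no evaluable functions occur — the $\SQHTF$ models of $\Pi$ are in a natural one-to-one correspondence with the $\SQHT$ (Herbrand) models of $\Pi$ over the signature $\tuple{\C,\emptyset,\P}$, and that this correspondence respects the $\preceq$ ordering. The main observation is that if $\F=\emptyset$ then $Terms(\C \cup \F)=Terms(\C)$, so by conditions (i)–(ii) of Definition~\ref{def:state} the only legal choice of $\sigma$ in any state $(\sigma,A)$ is the identity on $Terms(\C)$; in particular no term is ever undefined and $\sigma^\h=\sigma^\t$. Hence a state reduces to its set of atoms $A\subseteq HB(\C,\P)$, and an $HT$-interpretation reduces to a pair $\tuple{I^\h,I^\t}$ of sets of atoms with $I^\h\subseteq I^\t$; this is exactly the data of a Herbrand $\SQHT$ interpretation (as recalled in the paragraph describing $\SQHT$ as an instance of $\SQHTF$, where the $\preceq$ ordering "amounts to a simple inclusion of sets of ground atoms").

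First I would make this reduction precise: given an $\SQHTF$ interpretation $I=\tuple{(\sigma^\h,I^\h),(\sigma^\t,I^\t)}$ for $\tuple{\C,\emptyset,\P}$, note $\sigma^\h=\sigma^\t=\mathit{id}$, and associate to it the Herbrand $\SQHT$ interpretation $\widehat{I}$ determined by $I^\h,I^\t$; conversely every Herbrand $\SQHT$ interpretation arises this way. Then I would verify, by a routine structural induction on the formula $\varphi$, that $I,w\models\varphi$ in $\SQHTF$ iff $\widehat{I},w\models\varphi$ in $\SQHT$, for every $w\in\{\h,\t\}$. The atomic cases match because $\sigma^w$ is the identity (for predicate atoms) and because an equality atom $t_1=t_2$ with $t_1,t_2\in Terms(\C)$ holds iff $t_1$ and $t_2$ are syntactically the same term in both logics; the connective and quantifier cases are identical in the two semantics since the domain is $Terms(\C)$ in both. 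Since being a model is satisfaction at $\h$, this shows $I\models\Pi$ (in $\SQHTF$) iff $\widehat{I}\models\Pi$ (in $\SQHT$), i.e.\ the model classes correspond.

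It remains to transport the equilibrium selection. I would observe that under the identification above, $S^\h\prec S^\t$ in $\SQHTF$ becomes strict inclusion $I^\h\subsetneq I^\t$ in $\SQHT$ (condition (ii) of Definition~\ref{def:order} is vacuous here, as $\sigma^\h(d)=\sigma^\t(d)$ for all $d$). Hence a total model $\tuple{S^\t,S^\t}$ of $\Pi$ is an $\SQHTF$ equilibrium model iff no $S^\h\prec S^\t$ gives a model, iff no $I^\h\subsetneq I^\t$ gives an $\SQHT$ model, iff $\widehat{I}$ is a $\QEL$ equilibrium model of $\Pi$. Invoking the quoted equivalence of this description of $\SQHT$ with the one in~\cite{PV08} then yields the claim. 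I do not expect a serious obstacle; the only point requiring a little care is the atomic case of the induction — specifically checking that nothing in the $\SQHTF$ equality clause "$\sigma^w(t_1)=\sigma^w(t_2)\neq\u$" diverges from ordinary Herbrand equality when $\sigma^w$ is forced to be the identity — and the bookkeeping that the forced-totality of $\sigma$ indeed follows from conditions (i)–(ii) when $\F=\emptyset$.
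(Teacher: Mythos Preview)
Your proposal is correct and is essentially a careful unpacking of what the paper treats as immediate: the paper gives no explicit proof of this proposition (it is stated with a terminal $\Box$), relying instead on the preceding paragraph that describes $\SQHT$ Herbrand models as the special case of $\SQHTF$ over the signature $\tuple{\C,\emptyset,\P}$, where $\sigma$ is forced to be the identity and the $\preceq$ order collapses to set inclusion. Your structural induction and order-preservation argument spell out exactly this identification, so there is no divergence in approach---only in the level of detail.
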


\subsection{Translation to programs without functions}

The translation of an FLP-program $\Pi$ will be done in two steps. In a first step, we will define a $\QEL$ theory $\Gamma(\Pi)$ for a different signature and prove that it is $\SQHTF$ equivalent modulo the original signature. This theory $\Gamma(\Pi)$ is not an LP-program yet, as it will allow existential quantifiers and double negations in the rule bodies. However, these features can be removed in a second step by introducing auxiliary predicates, so that the resulting LP-program preserves strong equivalence wrt equilibrium models (modulo the original signature). We will focus here on the first translation step -- for a description on the transformations removing existential quantifiers and double negations and its complete proof of correctness see~\cite{Cab09}.

The main idea of the translation is that, for each evaluable function $f/n \in \F$ occurring in $\Pi$ we will introduce a predicate like $holds\_f(X_1,\dots,X_n,V)$ in $\tr{\Pi}$, or $holds\_f(\V{X},V)$ for short. The technique of converting a function into a predicate and shifting the function value as an extra argument is well known in Functional Logic Programming and has received the name of \emph{flattening}~\cite{Nai91,Rou94}. Flattening in ASP was also applied for translating the languages in~\cite{CL04,Cab05} into (function-free) logic programs, and in~\cite{LW08} to show that total functions can be removed in favour of predicates. Obviously, once we deal with a predicate, we will need that no two different values are assigned to the same function. This can be simply captured by:
\begin{eqnarray}
\bot & \leftarrow & holds\_f(\V{X},V) \wedge holds\_f(\V{X},W) \wedge \neg (V=W) \label{f:unique}
\end{eqnarray}
\noindent with variables $V, W$ not included in $\V{X}$. 

Given the original signature $\Sigma=\tuple{\C,\F,\P}$ for program $\Pi$, the theory $\Gamma(\Pi)$ will deal with a new signature $\tr{\Sigma}=\tuple{\C,\emptyset,\tr{\P}}$ where $\tr{\P}$ consists of $\P$ plus a new predicate $holds\_f/(n+1)$ per each evaluable function $f/n \in \F$.

\begin{definition}[Correspondence of interpretations]
Given an $HT$ interpretation $I=\tuple{S^\h,S^\t}$ for signature $\Sigma=\tuple{\C,\F,\P}$ we define a \emph{corresponding interpretation} $\tr{I}=\tuple{(\sigma^\h,J^\h),(\sigma^\t,J^\t)}$ for signature $\tr{\Sigma}=\tuple{\C,\emptyset,\tr{\P}}$ so that, for any $f/n \in \F$, any tuple $\V{c}$ of $n$ elements from $\C$, any predicate $p/n \in \P$ and any $w \in \{\h,\t\}$:
\begin{enumerate}
\item $holds\_f(\V{c},d) \in J^w$ iff $\sigma^w(\V{c})=d$ with $d \in \C$.
\item $p(\V{c}) \in J^w$ iff $p(\V{c}) \in I^w$. \qed
\end{enumerate}
\end{definition}

Once (\ref{f:unique}) is fixed, the correspondence between $I$ and $\tr{I}$ is bidirectional:
\begin{proposition}
\label{prop:1}
Given signature $\Sigma=\tuple{\C,\F,\P}$ and an interpretation $J$ for $\Sigma^*$ satisfying {\rm (\ref{f:unique})}, then there exists an interpretation $I$ for $\Sigma$ such that $\tr{I}=J$.
\end{proposition}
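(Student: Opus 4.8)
The plan is to construct the interpretation $I$ for $\Sigma$ \emph{explicitly} out of $J$ and then verify the two clauses of the correspondence. Write $J = \tuple{(\sigma_0,J^\h),(\sigma_0,J^\t)}$. Since the signature $\tr{\Sigma} = \tuple{\C,\emptyset,\tr{\P}}$ has no evaluable functions, condition~(i) of Definition~\ref{def:state} forces the valuation of $J$ in either world to be the identity on $Terms(\C)$; thus $J$ is really carried by the pair of atom sets $J^\h \subseteq J^\t$ (the $\preceq$ valuation clause holding trivially). For each world $w \in \{\h,\t\}$ I would take $I^w \is J^w \cap HB(\C,\P)$ --- keep the $\P$-atoms of $J^w$, discard the $holds\_f$ atoms --- and define the valuation $\sigma^w$ on the ``canonical'' ground terms $f(\V{c})$, with $f/n \in \F$ and $\V{c}$ a tuple of elements of $Terms(\C)$, by $\sigma^w(f(\V{c})) \is d$ if $holds\_f(\V{c},d) \in J^w$ for some $d \in Terms(\C)$, and $\sigma^w(f(\V{c})) \is \u$ otherwise. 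By the observation following Definition~\ref{def:state}, any assignment on canonical terms extends uniquely to a valuation satisfying (i) and (ii); this produces $\sigma^\h$ and $\sigma^\t$, and hence $I = \tuple{(\sigma^\h,I^\h),(\sigma^\t,I^\t)}$.

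Well-definedness of $\sigma^w$ on canonical terms is exactly where the hypothesis~\eqref{f:unique} enters: unwinding what it means for $J$ to satisfy the constraint~\eqref{f:unique} --- the body must be false at \emph{both} worlds reachable from $\h$, and by Proposition~\ref{prop:neg} the $\neg(V=W)$ conjunct is decided at world $\t$, where the valuation is the identity --- shows that neither $J^\h$ nor $J^\t$ can contain two atoms $holds\_f(\V{c},d)$ and $holds\_f(\V{c},d')$ with $d \neq d'$, so the value picked above is unambiguous in each world.

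Next I would check that $I$ is a legitimate $HT$-interpretation, i.e. $(\sigma^\h,I^\h) \preceq (\sigma^\t,I^\t)$ in the sense of Definition~\ref{def:order}. The inclusion $I^\h \subseteq I^\t$ is immediate from $J^\h \subseteq J^\t$. For the valuation clause it is enough to establish ``$\sigma^\h(d)=\sigma^\t(d)$ or $\sigma^\h(d)=\u$'' first on canonical terms and then on all ground terms. On a canonical term $f(\V{c})$: if $\sigma^\h(f(\V{c})) = d \neq \u$ then $holds\_f(\V{c},d) \in J^\h \subseteq J^\t$, and since $J^\t$ also satisfies~\eqref{f:unique} this very $d$ is the value assigned at $\t$, so $\sigma^\t(f(\V{c})) = d$. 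Lifting this to arbitrary ground terms is a routine induction on term structure via condition~(ii): if some argument of a term is undefined at $\h$ the whole term is undefined at $\h$; otherwise, by the induction hypothesis the arguments take the same defined value at $\h$ and $\t$, and evaluation reduces either to a canonical term (when the outer symbol is evaluable) or to an element of $Terms(\C)$ (when it is a constructor), where the property is already known or follows from~(i).

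Finally I would verify $\tr{I} = J$. The valuation components coincide because both are the identity on $Terms(\C)$. For atoms: $holds\_f(\V{c},d) \in (\tr{I})^w$ iff $\sigma^w(f(\V{c})) = d$ iff, by the very definition of $\sigma^w$, $holds\_f(\V{c},d) \in J^w$; and $p(\V{c}) \in (\tr{I})^w$ iff $p(\V{c}) \in I^w$ iff $p(\V{c}) \in J^w$ by the definition of $I^w$. As every atom over $\tr{\Sigma}$ is of one of these two shapes, $(\tr{I})^w = J^w$ for $w \in \{\h,\t\}$, so $\tr{I} = J$. The only genuinely delicate step is the $\preceq$-check in the third paragraph: one must combine the fact that~\eqref{f:unique} holds at world $\t$ with the inclusion $J^\h \subseteq J^\t$ to exclude $holds\_f$-atoms that conflict \emph{across} the two worlds; everything else is a direct unfolding of the definitions.
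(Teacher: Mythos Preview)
Your construction of $I$ is exactly the one in the paper's proof: restrict $J^w$ to the $\P$-atoms to get $I^w$, define $\sigma^w$ on canonical terms via the $holds\_f$ atoms (well-defined thanks to~\eqref{f:unique}), and extend by condition~(ii) of Definition~\ref{def:state}. Your proposal is in fact more complete than the paper's, which stops after the construction and the well-definedness remark, whereas you additionally spell out the $\preceq$ check and the verification that $\tr{I}=J$.
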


\begin{definition}[Translation of terms]
We define the translation of a term $t$ as the triple $\tuple{\tr{t},\Phi(t)}$ where $\tr{t}$ is an LP-term and $\Phi(t)$ is a formula s.t.:
\begin{enumerate}
\item For an LP-term $t$, then $t^* \eqdef t$ and $\Phi(t) \eqdef \top$.

\item When $t$ is like $f(\V{t})$ with $f$ an evaluable function, then
$t^* \eqdef X$ and $\Phi(t) \eqdef \Phi(\V{t}) \wedge holds\_f(\tr{\V{t}},X)$ where $X$ is a new fresh variable and $\Phi(\V{t})$ stands for the conjunction of all $\Phi(t_i)$ for all terms $t_i$ in the tuple $\V{t}$.
\qed
\end{enumerate}
\end{definition}

For 0-ary evaluable functions, we would have that $\V{t}$ is empty -- in this case we just assume that $\Phi(\V{t})=\top$. We introduce now some additional notation. Given a term $t$, $subterms(t)$ denotes all its subterms, including $t$ itself. Given a set of terms $T$, by $\tr{T}$ we mean $\{\tr{t} \ | \ t \in T \}$. If $\rho$ is a replacement of variables by Herbrand constants $\replace{\V{X}}{\V{c}}$, we write $I,w,\rho \models \varphi$ to stand for $I,w \models \varphi\replace{\V{X}}{\V{c}}$. Given a conjunction of literals $B=L_1\wedge \dots \wedge L_n$, we denote $\tr{B}\eqdef \tr{L}_1\wedge \dots \wedge \tr{L}_n$.

\begin{definition}[Translation of literals]\label{def:lit}
The \emph{translation of an atom} (or positive literal) $A$ is a formula $\tr{A}$ defined as follows:
\begin{enumerate}
\item If $A$ has the form $p(\V{t})$, then $\tr{A} \eqdef \exists \V{X} \big(\  p(\tr{\V{t}}) \wedge \Phi(\V{t}) \ \big)$ where $\V{X}$ is the set of new fresh variables in $\tr{subterms(\V{t})}$ (those not occurring in the original literal).
\item If $A$ is like $(t_1=t_2)$, then $\tr{A} \eqdef \exists \V{X}\big(\ \tr{t}_1=\tr{t}_2 \wedge \Phi(t_1) \wedge \Phi(t_2) \ \big)$  where $\V{X}$ is the set of new fresh variables in $\tr{subterms(t_1)} \cup \tr{subterms(t_1)}$.
\item If $A$ is like $(t_1 \ap t_2)$, then $\tr{A} \eqdef \exists \V{X}\big(\ \tr{t}_1 \neq \tr{t}_2 \wedge \Phi(t_1) \wedge \Phi(t_2) \ \big)$  where $\V{X}$ is the set of new fresh variables in $\tr{subterms(t_1)} \cup \tr{subterms(t_1)}$.
\end{enumerate}
\noindent The \emph{translation of a negative literal} $L=\neg A$ is the formula $\tr{L}\eqdef \neg \tr{A}$.\qed
\end{definition}

Notice the difference in the translation of the two kinds of inequalities $\neq$ and $\ap$. For instance, while $f(X)=0$ becomes the negated formula $\neg \exists X' \ (X'=0 \wedge holds\_f(X,X'))$, i.e., either $f$ has no value or it has a non-zero value, the literal $f(X) \ap 0$ becomes the formula $\exists X' \ (X' \neq 0 \wedge holds\_f(X,X'))$ which can be seen as ``positive,'' as negation does not affect to any predicate appart from equality.

\begin{definition}[Translation of rules]
The \emph{translation of an (FLP) rule} $r$ like $H \IF B$ is a conjunction of formulas $\Gamma(r)$ defined as follows:
\begin{enumerate}
\item If $H=\bot$, then $\Gamma(r)$ is the formula $\bot \leftarrow \tr{B}$.
\item If $H$ is like $p(\V{t})$ then $\Gamma(r)$ is the formula $p(\tr{\V{t}}) \leftarrow \Phi(\V{t}) \wedge \tr{B}$
\item If $H$ has the form $f(\V{t}) \is t'$ then $\Gamma(r)$ is the formula\\
$holds\_f(\tr{\V{t}},\tr{t'}) \leftarrow \Phi(\V{t}) \wedge \Phi(t') \wedge \tr{B}
$
\item If $H$ has the form $f(\V{t}) \in \{X \ | \ \varphi(X) \} $ then $\Gamma(r)$ is the conjunction of:
\begin{eqnarray}
holds\_f(\tr{\V{t}},X) \vee \neg holds\_f(\tr{\V{t}},X) \leftarrow  \Phi(\V{t}) \wedge \tr{B} \wedge \tr{\varphi(X)} \label{f:bb1}\\
\bot \leftarrow \neg \exists X (holds\_f(\tr{\V{t}},X) \wedge \tr{\varphi(X)}) \wedge \Phi(\V{t}) \wedge \tr{B} \label{f:bb2}
\end{eqnarray}
\noindent where we assume that, if $X$ happened to occur in $B$, we have previously replaced it in the choice by a new fresh variable symbol, say $\{Y \ | \ \varphi(Y)\}$.
\end{enumerate}
\end{definition}

\begin{definition}[Translation of a program $\Gamma(\Pi)$]
The \emph{translation of an FLP program} $\Pi$ is a theory $\Gamma(\Pi)$ consisting of the union of all $\Gamma(r)$ per each rule $r \in \Pi$ plus, for each evaluable function $f/n$, the schemata (\ref{f:unique}).\qed
\end{definition}

\begin{theorem}[Correctness of $\Gamma(\Pi)$]
\label{th:correct}
For any FLP-program $\Pi$ with signature $\Sigma=\tuple{\C,\F,\P}$ any pair of interpretations $I$ for $\Sigma$ and $J$ for $\tr{\Sigma}$ such that $J=\tr{I}$:
$I,w \models \Pi$ iff $\tr{I},w \models \Gamma(\Pi)$.\qed
\end{theorem}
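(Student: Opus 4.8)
The plan is to reduce the claim to a rule-by-rule equivalence and then to a lemma about the translation of terms and literals. First, note that $\Gamma(\Pi)$ is the union of the formulas $\Gamma(r)$, one per rule $r \in \Pi$, together with one instance of schema~(\ref{f:unique}) for each evaluable function. Since $\tr{I}$ is built so that $holds\_f(\V{c},d) \in J^w$ iff $\sigma^w(f(\V{c})) = d$, and $\sigma^w$ is a function, the body of~(\ref{f:unique}) is never satisfied by $\tr{I}$ at any world, so $\tr{I},w \models$~(\ref{f:unique}) holds trivially. Hence it suffices to show, for every FLP-rule $r$, that $I,w \models r$ iff $\tr{I},w \models \Gamma(r)$; and since free variables are implicitly universally quantified and $\Gamma(r)$ carries extra fresh variables, the statement actually proved is: for every $w'\geq w$ and every ground substitution $\rho$ of the original variables of $r$, $I,w',\rho \models r$ iff, for every extension of $\rho$ to the fresh variables of $\Gamma(r)$, $\tr{I},w',\rho \models \Gamma(r)$.

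The technical core is a term-translation lemma, proved by structural induction on $t$: for any term $t$, world $w$ and ground substitution $\rho$ of its variables, $\sigma^w(t\rho) \neq \u$ iff there is a (necessarily unique) ground substitution $\mu$ of the fresh variables of $\tr{t}$ with $\tr{I},w,\rho\mu \models \Phi(t)$, in which case $(\tr{t})\rho\mu = \sigma^w(t\rho)$; and if $\sigma^w(t\rho)=\u$ then no such $\mu$ exists. The base case ($t$ an LP-term) is immediate since $\Phi(t)=\top$, $\tr{t}=t$, and $\sigma^w$ is the identity on Herbrand terms. The inductive step ($t=f(\V{t})$ with $f$ evaluable) uses condition~(ii) of Definition~\ref{def:state} to propagate undefinedness from an argument, and in the defined case combines condition~(ii) with the defining clause $holds\_f(\V{c},d)\in J^w \Leftrightarrow \sigma^w(f(\V{c}))=d$ of $\tr{I}$ to read off that the witness for the fresh variable of $\tr{t}$ is exactly $\sigma^w(f(\V{t}\rho))$. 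From this, the literal-translation lemma — $I,w,\rho \models L$ iff $\tr{I},w,\rho \models \tr{L}$ — follows by cases on $L$: for a predicate atom $p(\V{t})$ one merges the (disjoint) per-argument witnesses into one $\mu$, uses that an atom with an undefined argument is false, and applies the predicate clause of the correspondence; for $t_1=t_2$ and $t_1 \ap t_2$ one additionally uses that $\tr{t}_1,\tr{t}_2$ are Herbrand terms, so $=$ and $\neq$ between them reduce to syntactic (in)equality; and for $\neg A$ one applies Proposition~\ref{prop:neg} to move to world $t$ and invoke the positive case there.

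Finally I would treat the four head forms of Definition~\ref{def:flprule} one by one. Here one unfolds the derived operators — $\varphi \IF \psi \equiv \psi \rightarrow [\varphi]$, the definitions of $E$, $\is$, $\ap$, the set-choice expansion (\ref{f:set1})--(\ref{f:set2}), and the guarding operator $[\cdot]$ — into primitive $\SQHTF$ formulas, applies the literal lemma to the body, and argues directly about the head, using repeatedly that a predicate atom or an equality atom is false as soon as one of its argument terms is undefined (the same fact that makes $\Phi(t)$ unsatisfiable exactly when $t$ is undefined). For $H=\bot$ and $H=p(\V{t})$ the match with cases 1 and 2 of the rule translation is routine once one checks that $[\cdot]$ forces precisely the definedness of the structural arguments of the head, which on the translated side is exactly what the conjuncts of $\Phi(\V{t})$ in the body of $\Gamma(r)$ express; for $H=f(\V{t})\is t'$ one likewise matches $E\,t' \rightarrow f(\V{t})=t'$ against $holds\_f(\tr{\V{t}},\tr{t'}) \leftarrow \Phi(\V{t})\wedge\Phi(t')\wedge\tr{B}$. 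The main obstacle is the choice head $f(\V{t}) \in \{x \mid \varphi(x)\}$: one must verify that its expansion, after $[\cdot]$ and $\IF$ are applied, corresponds clause-by-clause to the disjunctive rule~(\ref{f:bb1}) and the constraint~(\ref{f:bb2}), carefully handling the nested $\forall x$, the double negation $\neg\exists x(\dots)$, the renaming caveat on the choice variable, and the point that $f(\V{t})=x \vee f(\V{t})\neq x$ translates to $holds\_f(\tr{\V{t}},X) \vee \neg holds\_f(\tr{\V{t}},X)$ only because the translations of $=$ and of $\neq$ differ in exactly the way recorded in Definition~\ref{def:lit}. Everything else is bookkeeping of the two lemmas against the definitions.
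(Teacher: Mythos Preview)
Your proposal is correct and follows essentially the same route as the paper: a term-translation lemma proved by structural induction (the paper's Lemma~\ref{lem:1}), a literal-translation lemma derived from it (Lemma~\ref{lem:lits}), and then a case split on the four head forms to establish the per-rule equivalence (Lemma~\ref{lem:rules}), from which the theorem is immediate. Your explicit treatment of the uniqueness constraint~(\ref{f:unique}) as trivially satisfied by $\tr{I}$ is a point the paper's proof leaves implicit, and your separation of the original-variable substitution $\rho$ from the fresh-variable witness $\mu$ is a slightly cleaner packaging of what the paper does with a single replacement over $\tr{subterms(t)}$, but neither changes the argument.
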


As an example, the translation of $\Pi_{\ref{ex:ham}}$ is the theory $\Gamma(\Pi_{\ref{ex:ham}})$:
\begin{eqnarray}
holds\_next(X,Z) \vee \neg holds\_next(X,Z) \leftarrow arc(X,Z) \wedge node(X) \label{f:c2}\\
\bot \leftarrow \neg \exists Z (holds\_next(X,Z) \wedge arc(X,Z)) \wedge node(X) \label{f:c3}\\
visited(X) \leftarrow holds\_next(0,X) \label{f:c4}\\
visited(X_2) \leftarrow holds\_next(X,X_2) \wedge visited(X) \label{f:c5}\\
\bot \leftarrow node(X) \wedge \neg visited(X) \label{f:c6}\\
\bot \leftarrow holds\_next(X,V) \wedge holds\_next(X,W) \wedge \neg (V=W) \label{f:c7}
\end{eqnarray}

Of course, $\Gamma(\Pi)$ is not a normal logic program, since in the general case it contains rule heads like $\varphi \vee \neg \varphi$ for some atom $\varphi$, as in \eqref{f:c2}, or expressions in the body like $\exists X \ \varphi(X)$ with $\varphi(X)$ a conjunction of literals, as in \eqref{f:c3}. However, as we said before and is detailed in \cite{Cab09} and~\cite{LP09}, we can always build an LP-program $\tr{\Pi}$ by removing these constructions and introducing new auxiliary predicates. For instance, a formula like $\varphi \vee \neg \varphi \leftarrow \beta$, which is strongly equivalent to a double negation in the body $\varphi  \leftarrow \neg \neg \varphi \wedge \beta$, can be replaced by the pair of rules $(\varphi \leftarrow \neg aux \wedge \beta)$ and $(aux \leftarrow \neg \varphi \wedge \beta)$ where $aux$ is a new auxiliary predicate. Similarly, we can replace a formula $\exists X \ \varphi(X)$ in a rule body by a new auxiliary predicate $aux'$, and include a rule $(aux' \leftarrow \varphi(X))$ for its definition. Of course, the auxiliary predicates must incorporate as arguments all the free variables of the original expression they replace. In our example, the final program $\Pi^*_{\ref{ex:ham}}$ would result from replacing in $\Gamma(\Pi_{\ref{ex:ham}})$ the formula (\ref{f:c2}) by rules:
\begin{eqnarray*}
holds\_next(X,Z) \leftarrow \neg aux(X,Z) \wedge arc(X,Z) \wedge node(X) \\
aux(X,Z) \leftarrow \neg holds\_next(X,Z) \wedge arc(X,Z) \wedge node(X)
\end{eqnarray*}
\noindent and (\ref{f:c3}) by rules:
\begin{eqnarray*}
aux'(X) \leftarrow holds\_next(X,Z) \wedge arc(X,Z) \wedge node(X) \\
\bot \leftarrow \neg aux'(X) \wedge node(X)
\end{eqnarray*}

\subsection{Safety}

As we will translate a set of FLP-rules into a set of LP-rules, when trying to ground the latter we will need to guarantee their domain independence, i.e., that the set of stable models are not affected by extending the set of constants. To this aim we introduce a notion of \emph{safety} for FLP-rules that guarantees the safety of the resulting LP program.

\begin{definition}[Restricted variable]\label{def:rv}
A variable $X$ is said to be \emph{restricted} in a conjunction of literals $\beta$ by a positive literal $A$ in $\beta$ when $X$ occurs in $A$ and one of the following holds:
\begin{enumerate}
\item \label{def:rv_it1} $A$ has the form $p(\V{t})$;
\item $A$ contains a term $f(\V{t})$ and $X$ is one of the arguments in $\V{t}$;
\item $A$ has the form $f(\V{t})=X$ or $X=f(\V{t})$.
\item \label{def:rv_it4} $A$ has the form $X=Y$ or $Y=X$ and, in its turn, $Y$ is restricted by a different positive literal $A'$ in $\beta$.
\end{enumerate}
\noindent We just say that $X$ is \emph{restricted} in $\beta$ if it is restricted by some $A$ in $\beta$.\qed
\end{definition}

As an example, given the conjunction of literals:
\begin{eqnarray}
p(X,f(Y)) \wedge f(Z)\ap W \wedge \neg q(V) \wedge V=Y \label{f:rest}
\end{eqnarray}
\noindent $X$ and $Y$ are restricted by the first literal in \eqref{f:rest}, $Z$ is restricted by the second literal and $V$ is restricted by $V=Y$, since $Y$ is already restricted by the first literal. Similarly, in Example~\ref{ex:ham}, observe that $X$ is restricted in the bodies of \eqref{f:ham2}, \eqref{f:ham4} and \eqref{f:ham5}.

\begin{definition}[Safe rule]\label{def:safe}
A rule $r$, of one of the forms in Definition~\ref{def:flprule}, is said to be \emph{safe} when each variable $X$ occurring in $r$ satisfies:
\begin{enumerate}
\item If $X$ is not a choice variable and is not restricted in the body of $r$ then: 
\begin{itemize}
\item $X$ does not occur in the scope of negation and
\item $X$ is not $t'$ nor one of the arguments in $\V{t}$, in any of the possible forms of the head of $r$.
\end{itemize}
\item If $X$ is a choice variable, then it is restricted in the choice condition $\varphi(x)$.\qed
\end{enumerate}
\end{definition}
For instance, the rules $p(f(X),Y) \IF q(Y)$ and $g \in \{Y \ | \ p(Y) \}$ are safe, whereas the rules $f(Z) \is 0$ or $g \in \{Y \ | \ \neg p(Y) \}$ are not safe. A safe program is a set of safe rules. It can be easily checked that the FLP-program $\Pi_{\ref{ex:ham}}$ in Example~\ref{ex:ham} is safe.

When we restrict our definition of safety to the case of LP-programs, Definition~\ref{def:safe} trivially amounts to the standard concept of safe rule, with a minor exception due to our slightly weakened concept of restricted variable. In particular, when we do not have functions, the standard concept of restricted variable $X$ is requiring that $X$ is the argument of some positive literal formed with a non-equality predicate (Item~1 of Definition~\ref{def:rv}). In our case, we also allow that a variable becomes restricted by an equality atom $X=Y$ (Item~2 of Definition~\ref{def:rv}) in the trivial case where $Y$ is restricted by another positive literal, like in the example:
\begin{eqnarray}
p(X) \leftarrow q(Y) \wedge X=Y \label{f:eqsafe}
\end{eqnarray}
\noindent This rule, for instance, is considered unsafe by current implementations of grounders {\tt DLV}~\cite{LPF06+} and {\tt GrinGo}~\cite{GST07}, although it is strongly equivalent to $p(X) \leftarrow q(X)$ which is obviously safe. Of course, rejecting a rule like~\eqref{f:eqsafe} is not a great restriction, since a programmer would rarely write this kind of redundant code. In our case, however, accepting~\eqref{f:eqsafe} as safe gets a relative importance since bodies like this may easily arise from our automated translation from FLP-programs to LP-programs. Anyway, in order to make rules like \eqref{f:eqsafe} acceptable by current ASP grounders, we assume that our final LP-program $\Pi^*$ can be post-processed to remove these redundant variables by exhausting the rewriting rule:
\begin{eqnarray*}
\frac{\beta \wedge X=Y \rightarrow \alpha}
{\beta\replace{Y}{X} \rightarrow \alpha\replace{Y}{X}}
\end{eqnarray*}

As $\Pi^*$ is free of functions, its set of $\QELF$ equilibrium models coincides with its $\QEL$ equilibrium models, and these in their turn, in the case of safe programs, coincide with the set of stable models of the grounded version of the program $\Pi^*$. So, given the correspondence between $\Pi^*$ and $\Pi$ established in Theorem~\ref{th:correct}, we just remain to prove the following.

\begin{theorem}\label{th:safe}
If $\Pi$ is safe then $\tr{\Pi}$ is safe.\qed
\end{theorem}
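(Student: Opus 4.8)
The plan is to reduce Theorem~\ref{th:safe} to a rule-by-rule check: $\tr\Pi$ is obtained by first forming $\Gamma(\Pi)$ (the union of the $\Gamma(r)$ for $r\in\Pi$ together with the schemata \eqref{f:unique}) and then eliminating the existential quantifiers and the $\psi\vee\neg\psi$ heads via the auxiliary-predicate device of \cite{Cab09}. Accordingly I would first define a generalized notion of safety for the intermediate rules of $\Gamma(\Pi)$ (top-level free variables restricted by top-level positive literals, plus each $\exists\V{X}\,\psi$ subformula and each $\psi\vee\neg\psi$ head ``internally'' safe in the sense that its own variables are restricted within it or already restricted outside), observe that the removal transformation of \cite{Cab09} turns a generalized-safe rule into ordinary safe LP-rules (the new predicates carry exactly the free variables of the eliminated subformula, and its defining rule is safe precisely when that subformula was internally safe), and then prove that $\Pi$ safe implies $\Gamma(\Pi)$ generalized-safe. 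Schema~\eqref{f:unique} is trivially safe since $\V X$, $V$ and $W$ all occur as arguments of the positive literal $holds\_f(\V X,V)$.

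The workhorse is a pair of observations about the term translation, both proved by induction using the recursive clause $\Phi(f(\V t))=\Phi(\V t)\wedge holds\_f(\tr{\V t},X)$. First: for any term $t$, every fresh variable occurring in $\tr{subterms(t)}$ appears as an argument of some $holds\_g$-atom of $\Phi(t)$, and every variable of $t$ that lies inside an evaluable subterm likewise appears as an argument of some $holds\_g$-atom of $\Phi(t)$; hence in any rule body containing $\Phi(t)$ positively all these variables are restricted by item~\ref{def:rv_it1} of Definition~\ref{def:rv}. Second: FLP-restriction lifts, i.e.\ if a variable is restricted in a conjunction $\beta$ of FLP-literals then it is restricted in $\tr\beta$; this is a short case analysis on the four clauses of Definition~\ref{def:rv}, where items~\ref{def:rv_it1}--3 reduce to the first observation (the restricting atom translates to a formula still mentioning the variable as an argument of a predicate atom $p(\cdots)$ or $holds\_g(\cdots)$) and item~\ref{def:rv_it4} is preserved because an equality atom between LP-terms translates to itself and the literal restricting $Y$ translates by the induction hypothesis.

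With these in hand I would run through the four head forms of Definition~\ref{def:flprule}. For $H=\bot$ the translation is $\bot\leftarrow\tr B$, and safety follows from the second observation together with the fact that, by safety of $r$, every variable of $B$ occurring in the scope of negation is already restricted in $B$; for $H=p(\V t)$ and $H=(f(\V t)\is t')$ the translations place exactly the variables of $\V t$ (and $t'$) under the $\Phi$-atoms of the body, so the first observation restricts them, and the remaining head variables are controlled by item~1 of Definition~\ref{def:safe}. In each case one must additionally check that the defining rule of every auxiliary predicate introduced by $\exists$-elimination from a body literal $\tr A$ is itself safe: that rule's body is $p(\tr{\V s})\wedge\Phi(\V s)$ (or the equality/apartness analogue), which by the first observation restricts all variables of $A$. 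The negative case $\neg A$ is where the hypothesis is genuinely used: $\neg\tr A$ becomes $\neg aux'(\V Z)$, and soundness of this step requires each $Z\in\V Z$ to be restricted elsewhere in the translated body, which holds because a variable of a negative FLP-literal is, by safety of $r$, restricted by a positive literal of $B$, and that literal survives the translation by the second observation.

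I expect the main obstacle to be the choice head $H=(f(\V t)\in\{x\mid\varphi(x)\})$, whose translation is the pair \eqref{f:bb1}--\eqref{f:bb2}. For \eqref{f:bb1} the body $\Phi(\V t)\wedge\tr B\wedge\tr{\varphi(X)}$ must restrict the choice variable $X$ and all fresh/auxiliary variables; $X$ is restricted because item~2 of Definition~\ref{def:safe} restricts $x$ in $\varphi(x)$, whence the second observation restricts $X$ in $\tr{\varphi(X)}$, and the rest is as before. The delicate formula is \eqref{f:bb2}, whose body contains $\neg\exists X(holds\_f(\tr{\V t},X)\wedge\tr{\varphi(X)})$: after $\exists$-elimination this yields $\neg aux''(\V Z)$ plus a defining rule $aux''(\V Z)\leftarrow holds\_f(\tr{\V t},X)\wedge\tr{\varphi(X)}$, and one must verify both that every $Z\in\V Z$ is restricted in the residual body of \eqref{f:bb2} (using that the only genuinely new variable of the choice condition is the bound $X$, the others being restricted in $\Phi(\V t)\wedge\tr B$ by safety of $r$) and that the defining rule is safe (i.e.\ $X$ and every other variable of $\varphi$ is restricted by $holds\_f(\tr{\V t},X)\wedge\tr{\varphi(X)}$, which is exactly item~2 of Definition~\ref{def:safe} for $X$ and the second observation for the rest). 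Making these two requirements line up, and in particular pinning down exactly which free variables each auxiliary predicate must carry, is the part needing most care; the corrected formulation of safety (cf.\ the paper's footnote in Section~\ref{sec:intro}) is arranged precisely so that this closes.
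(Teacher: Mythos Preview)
Your proposal is correct and follows essentially the same two-stage strategy as the paper: factor through $\Gamma(\Pi)$ with an intermediate safety notion (your ``generalized safety'' is the paper's Definition~\ref{def:safe2}), prove that restriction lifts along the literal translation (your ``second observation'' is precisely Lemma~\ref{lem:rv}), and then invoke the result from~\cite{Cab09} for the auxiliary-predicate step. The only cosmetic difference is that the paper organises the core case analysis (Lemma~\ref{lem:1step}) by \emph{variable type} (original non-choice, choice, auxiliary) rather than by head form as you do, and it offloads the second-step details entirely to Theorem~7 of~\cite{Cab09} instead of spelling out the safety of each auxiliary defining rule; your more explicit treatment of the $\exists$-elimination and the choice-head constraint~\eqref{f:bb2} is sound and in fact clarifies exactly where item~2 of Definition~\ref{def:safe} is needed.
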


\section{Lin and Wang's evaluable total functions}
\label{sec:lin}

As we commented in the introduction, \cite{LW08} introduced a closely related approach for dealing with (evaluable) functions in ASP. We will refer to this approach as \emph{FASP}, taking the name of its associated implementation. In what follows, we show that FASP can be embedded into a subclass of $\QELF$ where all functions are \emph{total}, that is, they satisfy axiom \eqref{f:totalf}. 

FASP formalism is a many-sorted first order language, so that all constants, variables, predicate arguments and function arguments and values belong to a predefined type or sort, containing a finite and non-empty set of elements. 

\begin{definition}[FASP-signature]
A FASP-\emph{signature} has the form $\tuple{\C,\F,\P,\T,\rho}$ where $\C$, $\F$ and $\P$ have the same meaning as before and:
\begin{itemize}
\item $\C=\C_0$ that is, all constructors are $0$-ary (like in our FLP-programs);
\item $\F_0=\emptyset$, that is, there are no $0$-ary evaluable functions;
\item $\T$ is a non-empty finite set of \emph{type names}, at least including $bool$;
\item $\rho$ is a \emph{rank function} that assigns a pair $(\V{T},\tau)$ to each predicate, function and variable in the language, so that $\V{T}$ is a (possibly empty) tuple of type names called the \emph{domain}\footnote{Sometimes also called \emph{arity}, but we prefer here to maintain this name for the \emph{number} of arguments in the tuple.} and $\tau$ is a type name called the \emph{range}, so that, for predicates, $\tau=bool$, and for variables $\V{T}=\epsilon$ (the empty tuple).\qed
\end{itemize}
\end{definition}
We will use the following abbreviations for rank declarations:
\begin{eqnarray}
p & \subseteq & \tau_1 \times \dots \times \tau_n \label{p-rank}\\
f & : & \tau_1 \times \dots \times \tau_n \longrightarrow \tau_{n+1} \label{f-rank} \\
X & : & \tau \label{x-rank}
\end{eqnarray}
\noindent that respectively stand for $\rho(p)=((\tau_1, \dots, \tau_n),bool)$, $\rho(f)=((\tau_1, \dots, \tau_n),\tau_{n+1})$ and $\rho(X)=(\epsilon,\tau)$.

A FASP-\emph{rule} is an expression of the form:
\begin{eqnarray*}
A \leftarrow B_1 \wedge \dots \wedge B_m \wedge \neg C_1 \wedge \dots \wedge \neg C_m
\end{eqnarray*}
\noindent where $A$ is $\bot$ (empty) or a predicate atom and $B_i$, $1 \leq i\leq m$, and $C_j$, $1\leq j \leq n$ are atoms. As a many-sorted formalism, all terms and atoms occurring in the rule are supposed to additionally satisfy a \emph{type coherence} restriction: for short, all arguments of predicates and functions must be of a compatible sort with respect to their rank. In the case of equality, $t_1=t_2$ both $t_1$ and $t_2$ must belong to the same sort.

 A FASP-\emph{program} $\Pi$ is a set of FASP-rules together with a set of \emph{type definitions}, one for each type $\tau$ used in the rules of $\Pi$ and having the form $\tau : \{c_1, \dots, c_n\}$ where the $c_i$ is an enumeration of constant names with $n>0$. The following are a pair of examples extracted from~\cite{LW08}.

\begin{example}[Graph colouring problem]\label{ex:col}
We must assign a colour to each node of a graph so that no two adjacent nodes have the same colour. A possible formalisation in FASP uses a function $clr : node \longrightarrow colour$, a predicate $arc \subseteq node \times node$, a pair of variables $X,Y : node$ and the program $\Pi_{\ref{ex:col}}$ containing the single rule:
\begin{eqnarray}
\bot & \leftarrow & arc(X,Y) \wedge clr(X)=clr(Y) \label{f:colour}
\end{eqnarray}
\qed
\end{example}

\begin{example}[Hamiltonian Cycles in FASP]\label{ex:hamb}
For instance, the Hamiltonian Cycles are encoded in FASP with the program $\Pi_{\ref{ex:hamb}}$ consisting of rules:
\begin{eqnarray}
\bot & \leftarrow & \neg arc(X,next(X)) \label{f:fasp1} \\
visited(next(0)) & & \label{f:fasp2}\\
visited(next(X)) & \leftarrow & visited(X) \label{f:fasp3}\\
\bot & \leftarrow & \neg visited(X) \label{f:fasp4}
\end{eqnarray}
\noindent together with the following domain and range declarations
\[
\begin{array}{rcl@{\hspace{50pt}}rcl}
arc & \subseteq & node \times node &
next & : & node \longrightarrow node \\
visited & \subseteq & node &
X & : & node 
\end{array}\\
\hfill~\Box
\]
\end{example}

\begin{definition}[FASP-Interpretation]
Given a signature $\tuple{\C,\F,\P,\T,\rho}$, a \emph{FASP-interpretation} $S$ is a state $(\sigma,A)$ for $\tuple{\C,\F,\P}$ additionally satisfying:
\begin{itemize}
\item $\sigma(f({\V{c}}))\neq \u$ (functions are total)
\item For each predicate $p$ with domain $p \subseteq \tau_1 \times \dots \times \tau_n$ then, each atom $p(\V{c}) \in A$ satisfies $\V{c} \in \tau_1 \times \dots \times \tau_n$.
\item For each funcion $f$ with domain and range $f : \tau_1 \times \dots \times \tau_n \longrightarrow \tau_{n+1}$, $n>0$, then for any $\V{c} \in \tau_1 \times \dots \times \tau_n$, $\sigma(f(\V{c})) \in \tau_{n+1}$.\qed
\end{itemize}
\end{definition}

Given a FASP program $\Pi$, its grounding contains all type definitions in $\Pi$ plus the rules that are obtained by replacing all variables in the rules of $\Pi$ by the elements in their respective ranges in all the possible ways. Notice that the grounding of $\Pi$ may introduce constant symbols that did not occur in the non-ground rules, but were elements of some type $\tau_i \subseteq \C$ in the signature.

\begin{definition}[Reduction $\Pi^S$]
We define the \emph{reduction} of a (ground) FASP-program $\Pi$ under a FASP-interpretation $S=(\sigma,A)$, written $\Pi^S$, as the set of rules obtained from $\Pi$ by iterating the following transformations:
\begin{itemize}
\item replace each functional term $f(\V{c})$ in a rule by $\sigma(f(\V{c}))$;
\item replace by $\bot$ any equality literal like $c \neq c$ or $c = d$ with constants $c$ and $d$ syntactically different;
\item replace by $\bot$ any body literal $\neg p(\V{c})$ such that $p(\V{c}) \in A$;
\item replace by $\top$ the rest of literals $\neg p(\V{c})$ and the rest of equality literals from the bodies of the remaining rules.\qed
\end{itemize}
\end{definition}
We further assume that rules containing $\bot$ in their body are removed whereas all $\top$ constants are removed from rule bodies.

It is easy to see that the ground program $\Pi^S$ does not contain negation, equality or functions, although it may contain constraints. Let $\Pi^S_{nc}$ be the set of non-constraint rules in $\Pi^S$. This program has a propositional least model, a set of ground atoms we denote as $LM(\Pi^S_{nc})$. 

\begin{definition}[Answer Set]
We say that a FASP-interpretation $S=(\sigma,A)$ is an \emph{answer set} of a (ground) FASP-program $\Pi$ if $A=LM(\Pi^S_{nc})$ and $A$ satisfies all the constraints in $\Pi^S$.\qed
\end{definition}

\subsection{Correspondence to FLP-programs}

It may be noticed that the main syntactic difference between FASP and FLP-programs relies in that the former are many-sorted. To overcome this difficulty, we will introduce sorts in FLP-programs as abbreviations of additional conditions and constraints. To this aim, given a FASP-program, we define the corresponding FLP-program $\hat{\Pi}$ as follows.

For each type declaration $\tau : \{c_1, \dots, c_n\}$ in a FASP-program $\Pi$ we include a new fresh predicate with the same name $\tau$ in the signature of $\hat{\Pi}$ plus the set of FLP-atoms $\tau(c_i)$ for $1\leq i \leq n$. For each function rank declaration like \eqref{f-rank} we include in $\hat{\Pi}$ the rule:
\begin{eqnarray}
f(X_1,\dots,X_n) \in \{ X' \ | \ \tau_{n+1}(X') \} \leftarrow \tau_1(X_1) \wedge \dots \wedge \tau_n(X_n) \label{f:choice}
\end{eqnarray}
\noindent and for any FASP-rule $\alpha \leftarrow \beta$ containing variables $X_1 \dots X_n$ with their respective ranges $\tau_1,\dots,\tau_n$, we include in $\hat{\Pi}$ the FLP-rule:
\begin{eqnarray}
\alpha \leftarrow \beta \wedge \tau_1(X_1) \wedge \dots \wedge \tau_n(X_n) \label{f:varrange}
\end{eqnarray}
For instance, program $\hat{\Pi}_{\ref{ex:col}}$ would correspond to:
\begin{eqnarray}
clr(X) \in \{Y \ | \ colour(Y)\} & \leftarrow & node(X) \label{f:col_c3}\\
\bot & \leftarrow & arc(X,Y) \wedge clr(X)=clr(Y) \nonumber \\
& & \wedge \ node(X) \wedge node(Y) \label{f:col_c4}
\end{eqnarray}
\noindent plus a set of facts for unary predicates $node$ and $colour$. Similarly, $\hat{\Pi}_{\ref{ex:hamb}}$ would consist of:
\begin{eqnarray}
next(X) \in \{Y \ | \ node(Y)\} & \leftarrow & node(X) \\
\bot & \leftarrow & \neg arc(X,next(X)) \wedge node(X) \\
visited(next(0)) & & \\
visited(next(X)) & \leftarrow & visited(X) \wedge node(X)\\
\bot & \leftarrow & \neg visited(X) \wedge node(X) 
\end{eqnarray}

It can be noticed that, for translating FASP-programs, we do not actually need using operator $\IF$ because functions are total (when applied to arguments in their domain). As a second observation, it is easy to see that, since all variables in any FASP program $\Pi$ are sorted, the resulting program $\hat{\Pi}$ will be safe, since any rule where a variable $X$ occurs will include in its body a predicate atom\footnote{In fact, this works in the same way as directive {\tt \#domain} directive in {\tt lparse}, Section 5.5 in~\cite{Syr07}, for declaring sorted variables.} $\tau(X)$. As a result, we can just focus the comparison on the ground versions of $\Pi$ and $\hat{\Pi}$, we respectively denote $grnd(\Pi)$ and $grnd(\hat{\Pi})$. Note that the grounding of a rule like \eqref{f:choice} corresponds to its definition as derived operator in terms of \eqref{f:set1} and \eqref{f:set2}. Furthermore, as types have a finite extension $\tau: \{c_1, \dots, c_n\}$, a formula like $\exists X (\tau(X) \wedge \alpha(X))$ can be unfolded as a finite disjunction $(\tau(c_1) \wedge \alpha(c_1)) \vee \dots \vee (\tau(c_n) \wedge \alpha(c_n))$. 

For instance, the grounding of \eqref{f:col_c3} for $node : \{1,2\}$ and $colour : \{r,g\}$ would contain (among other with false body) the set of rules:
\begin{eqnarray*}
clr(1)=r \vee clr(1)\neq r & \leftarrow & node(1) \wedge colour(r) \\
clr(1)=g \vee clr(1)\neq g & \leftarrow & node(1) \wedge colour(g) \\
\bot & \leftarrow & node(1) \wedge \neg (colour(g) \wedge f(1)=g \vee colour(r) \wedge f(1)=r)\\
clr(2)=r \vee clr(2)\neq r & \leftarrow & node(2) \wedge colour(r) \\
clr(2)=g \vee clr(2)\neq g & \leftarrow & node(2) \wedge colour(g) \\
\bot & \leftarrow & node(2) \wedge \neg (colour(g) \wedge f(2)=g \vee colour(r) \wedge f(2)=r)
\end{eqnarray*}
\noindent which, since the extent of $node$ and $colour$ is fixed, can be further simplified into the equivalent program:
\begin{eqnarray*}
clr(1)=r \vee clr(1)\neq r \\
clr(1)=g \vee clr(1)\neq g \\
\bot & \leftarrow & \neg f(1)=g \wedge \neg f(1)=r\\
clr(2)=r \vee clr(2)\neq r \\
clr(2)=g \vee clr(2)\neq g \\
\bot & \leftarrow & \neg f(2)=g \wedge \neg f(2)=r
\end{eqnarray*}

Generalising this process, the following lemma is relatively simple to check.
\begin{lemma}\label{lem:choicegrnd}
The grounding in $\hat{\Pi}$ of a choice rule like \eqref{f:choice} with respect to FASP program $\Pi$ and signature $\tuple{\C,\F,\P,\T,\rho}$ is equivalent to the set of ground formulas:
\begin{eqnarray}
f(\V{d})=c_i \vee f(\V{d}) \neq c_i \label{f:aa1} \\
\bot \leftarrow f(\V{d})\neq c_1 \wedge \dots \wedge f(\V{d}) \neq c_n \label{f:aa2}
\end{eqnarray}
\noindent for any $1 \leq i \leq n$, being $\tau:\{c_1,\dots,c_n\}$ the range of $f$, and for any $\V{d}$ tuple of constants in $\C$ such that $\V{d}$ belongs to the domain of $f$.\qed
\end{lemma}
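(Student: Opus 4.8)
The plan is to unfold the definition of the choice head $f(X_1,\dots,X_n) \in \{X' \mid \tau_{n+1}(X')\}$ according to its abbreviation as the conjunction of \eqref{f:set1} and \eqref{f:set2}, then ground it and simplify using the hypotheses of the FASP signature. First I would instantiate $\varphi(X')$ with $\tau_{n+1}(X')$, so that \eqref{f:choice} becomes the conjunction of $\forall X' (\tau_{n+1}(X') \rightarrow f(\V{t})=X' \vee f(\V{t})\neq X')$ and $\neg \exists X' (\tau_{n+1}(X') \wedge f(\V{t})=X') \rightarrow \bot$, all guarded by $\tau_1(X_1) \wedge \dots \wedge \tau_n(X_n)$ in the body. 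Grounding over all tuples $\V{d}$ of Herbrand constants: the guard $\bigwedge_i \tau_i(d_i)$ is true exactly when $\V{d}$ lies in the domain of $f$ (since $\hat\Pi$ contains precisely the facts $\tau_i(c)$ for the enumerated constants of each type, and no rule can derive further $\tau_i$ atoms), so for $\V{d}$ in the domain the body guard simplifies to $\top$ and can be dropped, while for $\V{d}$ outside the domain the whole ground instance has false body and is vacuously satisfied — hence discardable. This is the step that justifies restricting attention to $\V{d}$ in the domain of $f$ as in the statement.

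Next I would handle the universally quantified first conjunct. Grounding $\forall X'(\tau_{n+1}(X') \rightarrow f(\V{d})=X' \vee f(\V{d})\neq X')$ produces, for each Herbrand constant $c$, the formula $\tau_{n+1}(c) \rightarrow (f(\V{d})=c \vee f(\V{d})\neq c)$; when $c$ is one of $c_1,\dots,c_n$ (the range enumeration) the antecedent $\tau_{n+1}(c)$ is a fact, so the instance reduces to $f(\V{d})=c_i \vee f(\V{d})\neq c_i$, which is exactly \eqref{f:aa1}; when $c$ is not in the range, $\tau_{n+1}(c)$ is false and the instance is a tautology, discardable. For the second conjunct I would unfold $\exists X'(\tau_{n+1}(X') \wedge f(\V{d})=X')$ over the finite range as the finite disjunction $\bigvee_{i=1}^n (\tau_{n+1}(c_i) \wedge f(\V{d})=c_i)$, simplify each $\tau_{n+1}(c_i)$ to $\top$, obtaining $\bigvee_i f(\V{d})=c_i$, so the constraint becomes $\neg \bigvee_{i=1}^n f(\V{d})=c_i \rightarrow \bot$, i.e. $\bigwedge_{i=1}^n \neg(f(\V{d})=c_i) \rightarrow \bot$, which is precisely \eqref{f:aa2} written with $\neq$ (recall $t_1\neq t_2$ abbreviates $\neg(t_1=t_2)$).

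Throughout, "equivalent" should be read as $\SQHTF$-equivalence (indeed strong equivalence) modulo the theory of $\hat\Pi$, and I would point out that unfolding an $\exists$ over a finite, \emph{fixed-extension} set of constants into a finite disjunction is $\SQHTF$-valid, as is dropping tautological or false-bodied ground instances; these are the only logical facts needed beyond the definitions of the derived operators. The main obstacle — though a mild one — is being careful that the types behave as genuinely fixed sets: one must argue that in $\hat\Pi$ no rule other than the inserted facts can contribute atoms $\tau(c)$, so that the extension of each $\tau$ is exactly its enumeration in every equilibrium model and, more strongly, in every $\SQHTF$ model once those facts are present; this is what legitimises both the domain-guard simplification and the finite unfolding of the existential. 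Since $\tau$ occurs in the head of no rule of $\hat\Pi$ except the facts $\tau(c_i)$, this is immediate, and the rest is the routine symbol-pushing sketched above.
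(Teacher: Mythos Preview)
Your proposal is correct and follows precisely the approach the paper itself sketches: the paper does not give a formal proof of this lemma but simply works out the example of grounding \eqref{f:col_c3}, notes that $\exists X(\tau(X)\wedge\alpha(X))$ unfolds as a finite disjunction over the enumerated constants, simplifies the type atoms using the fixed extent of $node$ and $colour$, and then states that the lemma ``is relatively simple to check'' by generalising this process. Your write-up makes exactly these steps explicit---unfolding the choice head via \eqref{f:set1}--\eqref{f:set2}, discarding ground instances whose type guard is false, replacing the remaining $\tau_i(c)$ atoms by $\top$, and unfolding the existential---so there is no substantive difference in method.

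One small caution: your claim that the extension of each $\tau$ is fixed ``in every $\SQHTF$ model once those facts are present'' is too strong as stated; having the facts $\tau(c_i)$ only forces $\tau$ to \emph{contain} the enumeration, not to coincide with it, in an arbitrary $\SQHTF$ model. The justification you give (that $\tau$ occurs in no rule head other than the facts) secures this only at the level of equilibrium models. The paper is equally informal on this point, and for the downstream use of the lemma (Lemmas~\ref{lem:totfun} and~\ref{lem:grnd}, Theorem~\ref{th:lw}) the equilibrium-level reading suffices, but you may want to phrase the equivalence accordingly rather than as full $\SQHTF$-equivalence.
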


After examining the satisfaction of formulas in $\SQHTF$, from this we easily conclude the next result.

\begin{lemma}\label{lem:totfun}
Any $\SQHTF$ interpretation $I=(S^h,S^t)$, with $S^h=(\sigma^h,A^h)$ and $S^t=(\sigma^t,A^t)$, is a model of \eqref{f:aa1} and \eqref{f:aa2} iff $\sigma^h(f(\V{d}))=\sigma^t(f(\V{d}))=c$ being $\V{d}$ a tuple of constants in the domain of $f$, and $c$ some constant in the range of $f$.\qed
\end{lemma}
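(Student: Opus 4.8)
The plan is to evaluate the two schemata directly against the $\SQHTF$ satisfaction clauses, reducing everything to elementary conditions on $\sigma^h$ and $\sigma^t$ at the single term $f(\V{d})$. The key preliminary observation, obtained from Proposition~\ref{prop:neg}, is that an inequality atom $f(\V{d})\neq c_i$ behaves \emph{rigidly}: since $\sigma^w(c_i)=c_i\neq\u$ by condition~(i) of Definition~\ref{def:state}, we have $I,w\models f(\V{d})\neq c_i$ iff $I,t\not\models f(\V{d})=c_i$ iff $\sigma^t(f(\V{d}))\neq c_i$, independently of $w$. Hence $I,h\models f(\V{d})=c_i\vee f(\V{d})\neq c_i$ iff $\sigma^h(f(\V{d}))=c_i$ or $\sigma^t(f(\V{d}))\neq c_i$; and, unfolding the implication clause over the two worlds $w'\ge h$ and using rigidity again, $I,h\models$~\eqref{f:aa2} iff $\sigma^t(f(\V{d}))\in\{c_1,\dots,c_n\}$.

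Next I would split on Definition~\ref{def:order}(ii) applied to $f(\V{d})$: either $\sigma^h(f(\V{d}))=\u$ or $\sigma^h(f(\V{d}))=\sigma^t(f(\V{d}))$. In the first case, satisfaction of \eqref{f:aa1} for every $i$ forces $\sigma^t(f(\V{d}))\neq c_i$ for all $i$, since the left disjunct $\sigma^h(f(\V{d}))=c_i$ is impossible; but $\sigma^t(f(\V{d}))\notin\{c_1,\dots,c_n\}$ is exactly the negation of the condition derived above for \eqref{f:aa2}, so $I$ cannot model both — this case is therefore incompatible with $I$ being a model of \eqref{f:aa1} together with \eqref{f:aa2}. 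In the second case, put $c\eqdef\sigma^h(f(\V{d}))=\sigma^t(f(\V{d}))\in Terms(\C)$; each instance of \eqref{f:aa1} then reduces to ``$c=c_i$ or $c\neq c_i$'', a tautology, so \eqref{f:aa1} holds automatically, while \eqref{f:aa2} holds iff $c\in\{c_1,\dots,c_n\}$ (which in particular rules out $c=\u$). Combining the two cases yields that $I$ models \eqref{f:aa1} and \eqref{f:aa2} iff $\sigma^h(f(\V{d}))=\sigma^t(f(\V{d}))=c$ for some $c$ in the range of $f$, as claimed.

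I do not expect a genuine obstacle; the argument is essentially a bookkeeping exercise in the $\SQHTF$ clauses. The only point requiring care is keeping track of which world each subformula is evaluated at — in particular that $\neg$ ``jumps'' to the world $t$ (Proposition~\ref{prop:neg}) and that the implication in \eqref{f:aa2} ranges over both $w'\ge h$ — together with the standing facts that $\u$ never occurs inside the atom sets and that $\sigma^w$ fixes every Herbrand constant. Once these are kept straight, the equivalence falls out of the case analysis above, which also explains the parenthetical ``relatively simple to check'' preceding Lemma~\ref{lem:choicegrnd}.
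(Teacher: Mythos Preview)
Your argument is correct. The paper does not give a detailed proof of this lemma at all—it merely remarks that the result follows ``after examining the satisfaction of formulas in $\SQHTF$''—and what you have written is exactly that examination carried out in full: the rigidity of $\neq$ via Proposition~\ref{prop:neg}, the reduction of \eqref{f:aa2} to $\sigma^t(f(\V{d}))\in\{c_1,\dots,c_n\}$, and the dichotomy from Definition~\ref{def:order}(ii) are precisely the steps one needs, and your case analysis closes cleanly.
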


\begin{lemma}\label{lem:grnd}
Let $\Pi$ be a FASP-program for signature $\tuple{\C,\F,\P,\T,\rho}$ and $I$ any $\SQHTF$ interpretation $(S^h,S)$ with $S^h=(\sigma^h,A^h)$ and $S=(\sigma,A)$. Then $I \models grnd(\hat{\Pi})$ iff $I \models grnd(\Pi)^S$.\qed
\end{lemma}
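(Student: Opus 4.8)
The plan is to split $grnd(\hat{\Pi})$ into three parts — the ground instances of the function-declaration choice rules \eqref{f:choice}, the ground instances of the type-guarded translated FASP-rules \eqref{f:varrange}, and the type facts $\tau(c_i)$ — and to match each part against $grnd(\Pi)^S$. First I would dispose of the choice rules: by Lemma~\ref{lem:choicegrnd}, each grounded instance of \eqref{f:choice} is equivalent to the family \eqref{f:aa1}--\eqref{f:aa2}, ranging over the tuples $\V{d}$ in the domain of $f$ and the constants $c_i$ in its range; and by Lemma~\ref{lem:totfun}, $I=(S^h,S)$ satisfies that family exactly when $\sigma^h(f(\V{d}))=\sigma(f(\V{d}))$ is a constant in the range of $f$, for every such $\V{d}$. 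This is precisely the condition under which $S^h$ and $S$ behave on functional terms like FASP-interpretations, which is also exactly what makes the reduct $grnd(\Pi)^S$ well-defined; so I would record that both sides of the biconditional presuppose it, and work under it — call it $(\star)$ — for the rest.

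Under $(\star)$, I would argue rule by rule. The key point is that every ground functional term $f(\V{d})$ occurring in a rule of $grnd(\hat{\Pi})$ evaluates, in either world of $I$, to the single constant $\sigma(f(\V{d}))$; hence replacing $f(\V{d})$ throughout a rule by $\sigma(f(\V{d}))$ — which is exactly what the reduct does — preserves satisfaction under $I$ at both worlds. In particular, since all functional terms appearing in heads are then defined, the $\IF$ in an FLP head collapses to ordinary implication, so the head-definedness guards contribute nothing. After the substitution the only remaining non-predicate constructs are equality literals between constants and negative literals over predicate atoms with constant arguments. For a ground literal $c=d$ with $c,d\in\C$, condition~(i) of Definition~\ref{def:state} gives $I,w\models c=d$ iff $c$ and $d$ are the same syntactic term, matching the reduct's replacement of $c=d$ (and of $c\neq d$) by $\bot$ when $c,d$ differ and by $\top$ otherwise. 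For a ground literal $\neg p(\V{d})$, Proposition~\ref{prop:neg} gives $I,w\models\neg p(\V{d})$ iff $p(\V{d})\notin A$, matching the reduct's replacement by $\bot$ when $p(\V{d})\in A$ and by $\top$ otherwise. Dropping $\top$ conjuncts and discarding rules with $\bot$ in the body are $\SQHTF$-preserving simplifications, so each reduced rule is, under $I$, $\SQHTF$-equivalent to the corresponding ground $\hat{\Pi}$-rule; constraints (head $\bot$) are handled identically.

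Finally I would reconcile the grounding itself: $grnd(\Pi)$ instantiates a sorted variable $X:\tau$ over the constants of $\tau$ only, whereas $grnd(\hat{\Pi})$ instantiates $X$ over all of $\C$ but adds the guard $\tau(X)$ to the body; the extra $\hat{\Pi}$-instances, in which some argument falls outside its declared type, carry a guard atom that is not among the type facts and hence correspond to no FASP ground instance, while the type facts $\tau(c_i)\in\hat{\Pi}$ correspond to the type definitions carried along in $grnd(\Pi)^S$. Collecting the correspondences for the choice rules, the translated rules (and constraints), and the type facts then yields $I\models grnd(\hat{\Pi})$ iff $I\models grnd(\Pi)^S$.

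I expect the \emph{main obstacle} to be the bookkeeping of this last step — lining the unsorted FLP grounding plus type guards up against the sorted FASP grounding and its reduct, and making sure the function-declaration choice rules on the $\hat{\Pi}$ side match the totality and well-typedness implicit in the definition of $grnd(\Pi)^S$ — rather than any single hard semantic argument; the genuinely semantic content is localized in Lemmas~\ref{lem:choicegrnd} and~\ref{lem:totfun} and in Proposition~\ref{prop:neg} together with Definition~\ref{def:state}(i), all already available.
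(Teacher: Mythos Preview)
Your proposal is correct and follows essentially the same route as the paper's proof: both reconcile the sorted FASP grounding with the unsorted grounding plus type guards, invoke Lemma~\ref{lem:totfun} to make $\sigma^h$ and $\sigma$ agree on all well-typed functional terms, substitute those values as the reduct does, and then dispatch equality literals and negative literals via Definition~\ref{def:state}(i) and Proposition~\ref{prop:neg} respectively. Your write-up is more explicitly structured than the paper's (which compresses all of this into a short paragraph); one small inaccuracy is that the rules \eqref{f:choice} and \eqref{f:varrange} defining $\hat{\Pi}$ already use ordinary $\leftarrow$, not $\IF$, so there is no head-definedness guard to collapse---but this is harmless, since under $(\star)$ the distinction would vanish anyway.
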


\begin{theorem}\label{th:lw}
Given a ground FASP-program $\Pi$ for signature $\tuple{\C,\F,\P,\T,\rho}$, $S=(\sigma,A)$ is an answer set for $grnd(\Pi)$ iff $(S,S)$ is an equilibrium model for $grnd(\hat{\Pi})$.
\end{theorem}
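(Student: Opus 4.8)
The plan is to chain together the three lemmas already established (Lemma~\ref{lem:choicegrnd}, Lemma~\ref{lem:totfun}, Lemma~\ref{lem:grnd}) and then reconcile the two distinct notions of "selected model": answer sets defined via the FASP reduction $\Pi^S$ and least model $LM(\Pi^S_{nc})$ on one side, and $\SQHTF$ equilibrium models of $grnd(\hat\Pi)$ on the other. First I would fix a ground FASP-program $\Pi$ and a FASP-interpretation $S=(\sigma,A)$. By Lemma~\ref{lem:grnd}, for \emph{any} $\SQHTF$ interpretation of the form $(S^h,S)$ we have $(S^h,S)\models grnd(\hat\Pi)$ iff $(S^h,S)\models grnd(\Pi)^S$; in particular, taking $S^h=S$, the total interpretation $(S,S)$ is a model of $grnd(\hat\Pi)$ iff $(S,S)\models grnd(\Pi)^S$. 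So the argument splits into two directions, and in each direction the task is to show that the equilibrium/minimality condition on the $\hat\Pi$ side matches the "$A = LM(\Pi^S_{nc})$ and $A$ satisfies the constraints of $\Pi^S$" condition on the FASP side.

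For the forward direction, assume $S$ is an answer set of $grnd(\Pi)$, i.e. $A = LM(\Pi^S_{nc})$ and $A$ satisfies all constraints in $\Pi^S$. I would first check that $S$ is a genuine state (conditions (i),(ii) of Definition~\ref{def:state}) and, since $\Pi$ is a FASP-program, that $\sigma$ is total and type-coherent — this is exactly what makes $(S,S)$ a model of $grnd(\hat\Pi)$ via Lemma~\ref{lem:totfun} (the choice-rule groundings \eqref{f:aa1}--\eqref{f:aa2} are satisfied precisely because $f$ is total with value in its range) and Lemma~\ref{lem:grnd} (the reduced program $\Pi^S$ is satisfied because $\Pi^S_{nc}$ holds in its least model $A$ and the constraints hold by assumption). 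Then I must rule out any strictly smaller $S^h=(\sigma^h,A^h)\prec S$ with $(S^h,S)\models grnd(\hat\Pi)$. By Lemma~\ref{lem:totfun} applied to the choice-rule consequences, $\sigma^h$ must agree with $\sigma$ on every $f(\V d)$ with $\V d$ in the domain, hence (by Definition~\ref{def:state}(ii) and the fact that $\C=\C_0$) $\sigma^h=\sigma$ everywhere; so a strictly smaller state can only differ by $A^h\subsetneq A$. By Lemma~\ref{lem:grnd} again, $(S^h,S)\models grnd(\Pi)^S$, and since $grnd(\Pi)^S$ restricted to its non-constraint part $\Pi^S_{nc}$ is a negation-free, function-free, equality-free definite program, satisfaction at the "here" world forces $A^h$ to be closed under the rules of $\Pi^S_{nc}$; but $A=LM(\Pi^S_{nc})$ is the \emph{least} such set, contradicting $A^h\subsetneq A$. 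Hence $(S,S)$ is an equilibrium model.

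For the converse, assume $(S,S)$ is an equilibrium model of $grnd(\hat\Pi)$ with $S=(\sigma,A)$. Then $(S,S)\models grnd(\hat\Pi)$, so by Lemma~\ref{lem:grnd} $(S,S)\models grnd(\Pi)^S$; in particular $A$ satisfies all constraints of $\Pi^S$ and $A$ is closed under $\Pi^S_{nc}$, giving $LM(\Pi^S_{nc})\subseteq A$. For the reverse inclusion I would consider the candidate smaller state $S^h=(\sigma, LM(\Pi^S_{nc}))$: it has the same (total) $\sigma$, so $S^h\preceq S$, and it is a model of $grnd(\Pi)^S$ at the "here" world — the definite rules of $\Pi^S_{nc}$ hold because $LM(\Pi^S_{nc})$ is closed under them, and one checks the constraints and the choice-rule consequences are inherited (the choice groundings \eqref{f:aa1}--\eqref{f:aa2} survive because $\sigma$ is unchanged, using Lemma~\ref{lem:totfun}/Lemma~\ref{lem:choicegrnd}). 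Hence by Lemma~\ref{lem:grnd} $(S^h,S)\models grnd(\hat\Pi)$; equilibrium of $(S,S)$ forces $S^h=S$, i.e. $A=LM(\Pi^S_{nc})$. Together with the constraint satisfaction already noted, $S$ is an answer set of $grnd(\Pi)$.

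The main obstacle I anticipate is the bookkeeping around the definite-program reduct: I must be careful that $grnd(\Pi)^S$ (after the reduction transformations — replacing functional terms by their $\sigma$-values, collapsing equality literals, and evaluating negative literals against $A$) really does behave like a classical definite program at the "here" world, so that "$(S^h,S)$ satisfies it" is equivalent to "$A^h$ is closed under $\Pi^S_{nc}$"; this relies on the reduct being negation-, function- and equality-free (explicitly stated in the excerpt just before Definition~Answer~Set) and on $HT$-satisfaction of a negation-free formula at $h$ coinciding with classical satisfaction by $I^h$. A secondary subtlety is handling the choice rules \eqref{f:choice}: in $\hat\Pi$ they contribute the disjunctive "$f(\V d)=c_i \vee f(\V d)\neq c_i$" facts plus a constraint, and I need Lemma~\ref{lem:totfun} to guarantee these pin down $\sigma^h=\sigma^t$ on all relevant ground terms so that no spurious strictly-smaller state arises from the functional component — reducing the minimization entirely to the atom set $A$, where the FASP least-model condition takes over. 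Finally, I should double-check the type-coherence clauses of the FASP-interpretation definition are exactly mirrored by the $\tau(\cdot)$ facts and the range-restricting choice rules in $\hat\Pi$, so the correspondence $S \leftrightarrow (S,S)$ is a bijection on the relevant interpretations and not merely a one-way implication.
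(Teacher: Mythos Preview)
Your proposal is correct and follows essentially the same route as the paper's proof: both directions hinge on Lemma~\ref{lem:totfun} to pin $\sigma^h=\sigma$ (so minimization reduces to the atom set) and on Lemma~\ref{lem:grnd} to transfer modelhood between $grnd(\hat\Pi)$ and the reduct $grnd(\Pi)^S$, after which the negation- and function-free nature of the reduct lets one compare $\SQHTF$ satisfaction at $h$ with classical closure under $\Pi^S_{nc}$. The only cosmetic difference is that in the converse direction the paper argues by contradiction from an unspecified $A'\subsetneq A$ modelling $grnd(\Pi)^S$, whereas you explicitly instantiate $A'=LM(\Pi^S_{nc})$; your version is slightly more informative but the content is the same.
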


\subsection{Some remarks on expressiveness}

At the sight of~\cite{LW08}, the reader may wonder about the real need for partial functions for knowledge representation. For instance, any partial function can be easily encoded as a total one by just adding a new special value (typically called $none$) to denote undefinedness\footnote{In fact,~\cite{LW08} does not specify the way in which, for instance, a division by zero should be treated.}. However, the real difference between FASP and  and $\QELF$ is not so related to totality versus partiality, but has more to do instead with a ``classical'' behaviour (what we called decidable functions) versus a true non-monotonic one. To illustrate this concept, consider  the following example.

\begin{example}[Empty chessboard cells]
When describing a chess ending situation, we want to specify the content of each chessboard cell. Typically, most cells will be $empty$, and in a few cases they will contain a chessman. To this aim, we want to use a function $board(X,Y)$ that specifies the content of a given cell position $X:\{a,\dots,h\}$ and $Y:\{1,\dots,8\}$, and a set of facts to describe the occupied cells, like: $board(a,1)=blkKing, board(b,1)=blkPawn, board(d,3)=whtHorse$, etc.\qed
\end{example}

Typically, when encoding this problem in a relational ASP setting, we would include a rule asserting that all cells are empty \emph{by default}. In a functional setting, this means that we need declaring a \emph{default value} for a given function, something that, as we saw in Section~\ref{sec:der}, can be compactly represented with the rule:
\begin{eqnarray*}
board(X,Y) \is empty & \IF & row(X) \wedge column(Y) \wedge \neg (board(X,Y) \ap empty)
\end{eqnarray*}
\noindent whose informal reading is ``assign an empty content when there is no evidence that the cell is non-empty.'' An important remark is that, although there may exist $\SQHTF$ models in which function $board$ is partial, \emph{this function will be total in any equilibrium model} (for any correct cell position $X,Y$), since the default above cannot leave $board(X,Y)$ undefined. 

On the other hand, a default like this does not seem easily representable in FASP, unless we make use of additional auxiliary predicates, i.e., we end up resorting to the relational fragment of FASP. The reason for this difficulty is that functions are decidable, and so, their value can be defined ``from the start.'' In this way, in FASP, we would have a free choice for selecting \emph{any} value for any function, and then only choices satisfying the rules and constraints eventually lead to an answer set. In our example, this means that if we just enumerate the occupied cells, we would have an answer set for \emph{any} possible combination of contents of the rest of cells, but no way to assume they are empty by default.

A similar difficulty would arise when representing inertia for functions when dealing with an actions and change scenario, something that in $\QELF$ would have a quite natural representation. For instance, if $board$ became a fluent, with a third parameter $I$ for representing a situation number, its inertia could be written as:
\begin{eqnarray*}
board(X,Y,I+1) \is board(X,Y,I) & \IF & \neg ( board(X,Y,I+1) \ap board(X,Y,I) )
\end{eqnarray*}

Finally, $\QELF$ allows a functional interpretation of predicates, as done for instance in~\cite{CL04,Cab05} so that we can define them as functions with a boolean range $\{true,false\}$. As shown in~\cite{CL04}, if we further assert that $false$ is a default value for all boolean functions, we obtain the same expressiveness as standard ASP. To put an example, the program
\begin{eqnarray*}
p & \leftarrow & \neg q \\
q & \leftarrow & r \wedge \neg p \\
r & \leftarrow & \neg s
\end{eqnarray*}
\noindent would be re-encoded using this technique as:
\begin{eqnarray*}
p=true & \leftarrow & q=false \\
q=true & \leftarrow & r=true \wedge q=false\\
r=true & \leftarrow & s=false \\
A \is false & \IF & \neg (A \ap false)
\end{eqnarray*}
\noindent for $A$ varying in $p,q,r,s$, so that the functional equilibrium models of this FLP-program correspond to the (standard) answer sets of the original program. In other words, we can encode full Answer Set Programming by exclusively using (boolean) functions with default values and without resorting to any predicate (excepting equality). In the case of FASP, the impossibility of representing defaults when only dealing with functions (that is, when we supress the use of predicates) would make this enconding to collapse into classical propositional logic.

\section{Related Work}
\label{sec:rel}

The present approach has incorporated many of the ideas previously presented in~\cite{CL04,Cab05}. For instance,~\cite{CL04} can be seen as the fragment of our FLP-programs where we disable the use of predicates and restrict default negation exclusively for specifying default values of functions.

With respect to other logical characterisations of Functional Programming languages, the closest one is perhaps~\cite{GHLR99}, from where we extracted the separation of constructors and evaluable functions. The main difference is that $\QELF$ provides a completely logical description of all operators, allowing an arbitrary syntax (including rules with negation, disjunction in the head, negation and disjunction of rules, etc). Another important difference is that $\QELF$ is constrained to strict functions, while~\cite{GHLR99} is based on non-strict functions. 

Scott's $E$-Logic is not the only choice for logical treatment of partial functions. A related approach is the so-called \emph{Logic of Partial Functions} (LPF)~\cite{BCJ84}. The main difference is that LPF is a three-valued logic -- formulas containing undefined terms have a third, undefined truth value. The relation to (relational) ASP in this way is much more distant than the current approach, since stable models and their logical counterpart, equilibrium models, are two-valued\footnote{Note that in this work we are not considering explicit negation.}.

As for the relation to other approaches exclusively dealing with Herbrand functions~\cite{Syr01,Bon04,SE07} an interesting topic for future study is analysing to which extent they could be captured by $\QEL^=$ semantics, i.e., the fragment of $\QELF$ without evaluable functions. 

\section{Conclusions}
\label{sec:conc}

We can summarize the main contributions of this paper into the introduction of a new language for dealing with functions in ASP and the discussion about several modelling issues not easily solvable within other ASP modelling paradigms. In this way, the paper has tried to clarify some relevant aspects related to the use of functions in ASP for Knowledge Representation. These aspects include the distinction between Herbrand and evaluable (and possibly partial) functions, the concept of definedness, the treatment of equality, the directionality in function assignments or a new nondeterministic choice operation for selecting a function value. 

The functional nature of some predicates is hidden in many ASP domains. When functions are represented in a relational way, we require the continuous addition of constraints for uniqueness of value, and a considerable amount of extra variables to replace the ability of nesting functional terms. All this additional effort may easily become a source for programming errors. 

Although, as we have shown, the proposed approach can be translated into relational ASP and merely considered as \emph{syntactic sugar}, we claim that the use of functions may provide a more natural, compact and readable way of representing many scenarios. The previous experience with a very close language to that of Section~\ref{sec:FLP}, implemented in an online interpreter\footnote{ Available at \url{http://www.dc.fi.udc.es/~cabalar/fal/} } and used for didactic purposes in the past, shows that the functional notation helps the student concentrate on the mathematical definition of the domain to be represented, and forget some of the low level representation tasks, as those commented above, or as the definedness conditions, that must be also considered in the relational representation, but the functional interpreter checks in an automatic way.

We hope that the current approach will help to integrate, in the future, the explicit treatment of arithmetic functions made by some ASP tools, that are currently handled \emph{outside} the formal setting. For instance, the ASP grounder {\tt lparse}\footnote{Available at \url{http://www.tcs.hut.fi/Software/smodels/}.} syntactically accepts a program like $p(div(10,X)) \leftarrow q(X)$ but raises a ``divide by zero'' runtime error if fact $q(0)$ is added to the program. On the other hand, when $div$ is replaced by a non-built-in function symbol, say $f$, the meaning is quite different, and we get $\{p(f(10,0)), q(0)\}$ as a stable model. In this paper we have also identified and separated evaluable and (possibly) partial functions (like $div$ above) from constructors (like $f$ in the previous example).

We have provided a translation of our functional language into normal logic programs to show that: (1) it can be implemented with current ASP solvers; but more important (2) that the proposed semantics is \emph{sensible} with respect to the way in which we usually program in the existing ASP paradigm. This translation has been implemented in a tool called {\tt lppf} (\emph{logic programs with partial functions})\footnote{Available at \url{http://www.equilibriumlogic.net/el/lppf/lppf.pl}}.

For future work, we plan to follow~\cite{LW08} work on loop formulas for converting their programs with total functions into Constraint Satisfaction Problems and extend their work for our functional logic programs. As in~\cite{LW08}, we expect to obtain a reduction on the size of ground functional logic programs, with respect to the size of their relational counterparts.

A topic for future study is the implementation of a solver that directly handles the functional semantics. Other open topics are the axiomatisation of the current logical framework, the addition of a second, explicit (or strong) negation, or the extension of {\tt lppf} to combine evaluable functions with constructors of arity greater than zero, using as a back-end the recently available tool {\tt DLV-complex}\footnote{Available at \url{http://www.mat.unical.it/dlv-complex}}.

\section*{Acknowledgements}
I am especially thankful to Joohyung Lee and Yunsong Meng for pointing out some technical errors in an early version of this work, and to Francisco L\'opez Fraguas for his bibliography guidance on semantics of partial functions in the field of Functional Logic Programming. This research was partially supported by Spanish MEC project TIN-2006-15455-C03-02 and Xunta de Galicia project INCITE08-PXIB105159PR.

\bibliography{refs}

\section*{Appendix. Proofs}

\begin{proof}[Proof of Proposition~\ref{prop:1}]
Then, it suffices with defining $I^w=\{p(\V{c}) \in J^w \ | \ p/n \in \P \}$ and $\sigma^w$ such that, for any evaluable function $f$ and tuple $\V{c}$ in elements of $Terms(\C)$: $\sigma^w(f(\V{c}))=d$ if $holds\_f(\V{c},d) \in J^w$; or $\sigma^w(f(\V{c}))=\u$ otherwise. Note that the latter is well-defined  since (\ref{f:unique}) guarantees that no pair of atoms $holds\_f(\V{c},d)$ and $holds\_f(\V{c},e)$ with $e \neq d$ are included in any $J^w$.  The rest of mapping $\sigma^w$ is built up from its structural definition implied by Condition (ii) in Definition~\ref{def:state}.
\end{proof}

\begin{lemma}
\label{lem:1}
For any term $t$, interpretation $I$ and corresponding interpretation $\tr{I}$, and for any replacement $\rho$ of variables in $\tr{subterms(t)}$ then $\tr{I},w,\rho \models \Phi(t)$ is equivalent to: $I,w,\rho \models E \ t$ and $I,w,\rho \models \tr{(t')}=t'$ for any $t' \in subterms(t)$.
\end{lemma}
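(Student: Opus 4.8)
The statement relates the "ancillary formula" $\Phi(t)$, produced by the term-translation, to two things on the original-signature side: definedness of $t$ (i.e.\ $I,w,\rho \models E\ t$) and the fact that the fresh variable $\tr{(t')}$ for every subterm $t'$ of $t$ correctly names the value of $t'$ (i.e.\ $I,w,\rho \models \tr{(t')}=t'$). The natural approach is \emph{structural induction on the term $t$}, exactly mirroring the two-case recursive definition of $\tuple{\tr{t},\Phi(t)}$ (LP-term base case; $f(\V{t})$ step case). I would first recall that $\Phi(t)$ is the conjunction of $\Phi(\V{t})$ and $holds\_f(\tr{\V{t}},X)$ (with $X$ the fresh variable $\tr{t}$), and that $\tr{subterms(t)}$ collects precisely these fresh variables, so the replacement $\rho$ assigns a Herbrand constant to each.

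\emph{Base case.} If $t$ is an LP-term then $\tr{t}=t$, $\Phi(t)=\top$, and $subterms(t)$ consists of $t$ itself together with its LP-subterms, all of which are their own translations. So $\tr{I},w,\rho \models \Phi(t)$ holds trivially, and on the right-hand side $I,w,\rho \models E\ t$ holds because by Definition~\ref{def:state}(i) every term built purely from constructors (and the universally-quantified, hence defined, variables) has $\sigma^w$-value $\neq \u$, while $I,w,\rho \models \tr{(t')}=t'$ reduces to $t'=t'$ which again holds by definedness. Here I'd use Proposition~\ref{prop:neg} implicitly and the satisfaction clause for equality ($I,w\models t_1=t_2$ iff $\sigma^w(t_1)=\sigma^w(t_2)\neq\u$).

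\emph{Inductive step.} Let $t=f(t_1,\dots,t_n)$ with $f$ evaluable. Then $\tr{t}=X$ fresh and $\Phi(t)=\Phi(\V{t})\wedge holds\_f(\tr{\V{t}},X)$. For the forward direction, assume $\tr{I},w,\rho\models\Phi(t)$. Splitting the conjunction: from $\tr{I},w,\rho\models\Phi(\V{t})$ and the induction hypothesis applied to each $t_i$, we get $I,w,\rho\models E\ t_i$ and $I,w,\rho\models \tr{(t')}=t'$ for every $t'\in subterms(t_i)$; this handles every proper subterm of $t$. It remains to handle $t$ itself: from $\tr{I},w,\rho\models holds\_f(\tr{\V{t}},X)$, by the Correspondence-of-interpretations definition, $\sigma^w(\rho(\tr{\V{t}}))=\rho(X)\in\C$; combining with the IH equalities $\sigma^w(t_i)=\rho(\tr{t_i})$ and Condition (ii) of Definition~\ref{def:state} (functions preserve interpretation through subterms), we obtain $\sigma^w(t)=\rho(X)\neq\u$, i.e.\ $I,w,\rho\models E\ t$ and $I,w,\rho\models \tr{(t)}=t$. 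The converse direction runs the same equivalences backwards: definedness of $t$ forces, via strictness (Condition (ii)), definedness of each $t_i$, hence the IH gives $\tr{I},w,\rho\models\Phi(\V{t})$, and the equality $\sigma^w(t)=\rho(X)$ together with the IH-supplied values of the $\tr{t_i}$ yields $\tr{I},w,\rho\models holds\_f(\tr{\V{t}},X)$ by the correspondence.

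\emph{Main obstacle.} The delicate point is bookkeeping the fresh variables and the replacement $\rho$: one must be careful that the $X$ introduced for $t$ is distinct from all variables in $\tr{subterms(\V{t})}$, that $\rho$ ranges over exactly those variables, and that "$\tr{(t')}=t'$ for all $t'\in subterms(t)$" genuinely includes the top-level claim $\tr{(t)}=t$ and not merely the proper subterms. The other place demanding care is the use of Condition (ii) of Definition~\ref{def:state} in \emph{both} directions — once as strictness (an undefined argument propagates upward) and once as subterm-coherence ($\sigma^w(f(t_1,\dots,t_n))=\sigma^w(f(\sigma^w(t_1),\dots,\sigma^w(t_n)))$) — to glue the IH values of the arguments to the value of $t$. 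Everything else is routine unwinding of the satisfaction clauses.
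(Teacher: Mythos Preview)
Your proposal is correct and follows essentially the same approach as the paper: structural induction on $t$, with the LP-term base case being trivial (since $\Phi(t)=\top$ and $\tr{t}=t$) and the inductive step for $t=f(\V{t})$ unfolding $\Phi(t)=\Phi(\V{t})\wedge holds\_f(\tr{\V{t}},X)$, applying the induction hypothesis to the arguments, and using the correspondence between $I$ and $\tr{I}$ together with Condition~(ii) of Definition~\ref{def:state} to pass from $holds\_f(\tr{\V{t}},X)$ to $f(\V{t})=X$ (and hence to $E\ t$ and $\tr{t}=t$). The only cosmetic difference is that the paper argues the two directions simultaneously as a chain of equivalences, whereas you separate the forward and backward directions and make the two uses of Condition~(ii) (strictness and subterm-coherence) explicit; your extra bookkeeping about fresh variables and $\rho$ is a welcome clarification of details the paper leaves implicit.
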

\begin{proof}
We proceed by induction. For the base case, when $t$ is an LP-term,
$E \ t$ is valid, and so equivalent to $\top=\Phi(t)$; besides,  $\tr{t}=t$ by definition and $t$ has no subterms. Assume proved for a tuple of terms $\V{t}$ and consider $t=f(\V{t})$. Then note that $\tr{I},w,\rho \models \Phi(t)$ is equivalent to condition (A): $\tr{I},w,\rho \models \Phi(\V{t})$ and $\tr{I},w,\rho \models holds\_f(\tr{\V{t}},X_t)$. Now the first conjunct of (A) is equivalent, by induction, to $I,w,\rho \models E \ \V{t}$ and $I,w \models \tr{(t')}=t'$ for any subterm of $\V{t}$, whereas the second conjunct of (A) is equivalent, by the correspondence between $I$ and $\tr{I}$, to $I,w,\rho \models f(\V{t})=X_t$ provided that we have already obtained $I,w,\rho \models \tr{\V{t}}=\V{t}$. To sum up, (A) is therefore equivalent to $I,w,\rho \models  E \ \V{t} \wedge f(\V{t})=X_t$ and $I,w,\rho \models \tr{(t')}=t'$ for any subterm of $\V{t}$. Since $E \ \V{t} \wedge f(\V{t})=X_t$ is equivalent to $E\ f(\V{t}) \wedge f(\V{t})=X_t$ and this, by definition, is the same than  $E\ t \wedge t=\tr{t}$, we finally obtain $I,w,\rho \models E \ t$ and $I,w,\rho \models \tr{(t')}=t'$ for any subterm of $t$.
\end{proof}

\begin{lemma}
\label{lem:lits}
For any body literal $L$: $\tr{I},w \models \tr{L}$ iff $I,w \models L$.
\end{lemma}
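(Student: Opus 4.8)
The plan is to prove the equivalence by a case analysis on the syntactic form of the body literal $L$, reading the statement under an arbitrary fixed ground replacement $\rho_0$ of the free variables of $L$ so that all instances below are ground, and letting Lemma~\ref{lem:1} carry the whole inductive burden about the flattening formulas $\Phi$. Throughout, the predicate part of the translation is handled by the defining conditions of the corresponding interpretation $\tr{I}$ (for a predicate atom, $p(\V{c})\in J^w$ iff $p(\V{c})\in I^w$), while the term part is handled by Lemma~\ref{lem:1}, which relates $\tr{I},w,\rho\models\Phi(t)$ to $I,w,\rho\models E\,t$ together with $I,w,\rho\models\tr{(t')}=t'$ for every $t'\in subterms(t)$. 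Recall also that in the FLP setting every ground LP-term is a Herbrand constant, so once a replacement is applied to the fresh flattening variables each $\tr{t}$ denotes an element of $\C_0$, with a world-independent valuation.

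For a positive literal $L=A$ I would split according to Definition~\ref{def:lit}. If $A=p(\V{t})$, then $\tr{A}=\exists\V{X}\,(p(\tr{\V{t}})\wedge\Phi(\V{t}))$. In the forward direction, a witnessing replacement $\rho$ extending $\rho_0$ to $\V{X}$ makes $\Phi(\V{t})$ true, so by Lemma~\ref{lem:1} every $t_i[\rho_0]$ is defined and $\sigma^w(\tr{t_i}[\rho])=\sigma^w(t_i[\rho_0])$; hence $\tr{I},w,\rho\models p(\tr{\V{t}})$ says precisely $p(\sigma^w(t_1[\rho_0]),\dots,\sigma^w(t_n[\rho_0]))\in J^w$, equivalently $\in I^w$ by the correspondence, i.e.\ $I,w\models p(\V{t})$ under $\rho_0$. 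For the converse, $I,w\models p(\V{t})$ forces every $\sigma^w(t_i[\rho_0])\neq\u$, so I would take the canonical witness $\rho$ sending each fresh variable $X_{t'}$ associated with a functional subterm $t'$ to the Herbrand constant $\sigma^w(t'[\rho_0])$; strictness guarantees all these values are defined, Lemma~\ref{lem:1} then yields $\tr{I},w,\rho\models\Phi(\V{t})$, and the correspondence yields $\tr{I},w,\rho\models p(\tr{\V{t}})$. The subcase $A=(t_1=t_2)$ is the same argument with $\Phi(t_1)\wedge\Phi(t_2)$ in place of $\Phi(\V{t})$: both sides reduce to $\sigma^w(t_1[\rho_0])=\sigma^w(t_2[\rho_0])\neq\u$, and no world other than $w$ enters. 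The subcase $A=(t_1\ap t_2)$, with translation $\exists\V{X}\,(\tr{t_1}\neq\tr{t_2}\wedge\Phi(t_1)\wedge\Phi(t_2))$, needs a little more care because $\neq$ is a negation and hence evaluated at world $\t$; but once $\Phi(t_1)\wedge\Phi(t_2)$ forces both $\tr{t_i}[\rho]$ to be Herbrand constants $c_i$ with $c_i=\sigma^w(t_i[\rho_0])\neq\u$, the literal $\tr{t_1}\neq\tr{t_2}$ collapses (by Proposition~\ref{prop:neg}) to the plain inequality $c_1\neq c_2$, which matches exactly the stated semantic effect of $\ap$, namely $\sigma^w(t_1[\rho_0])\neq\u$, $\sigma^w(t_2[\rho_0])\neq\u$ and $\sigma^w(t_1[\rho_0])\neq\sigma^w(t_2[\rho_0])$; the same canonical witness works backwards.

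Finally, for a negative literal $L=\neg A$ the translation is $\tr{L}=\neg\tr{A}$, so by Proposition~\ref{prop:neg} both $\tr{I},w\models\neg\tr{A}$ and $I,w\models\neg A$ reduce to statements at world $\t$, namely $\tr{I},\t\not\models\tr{A}$ and $I,\t\not\models A$ respectively, and these are equivalent by the positive case instantiated at world $\t$. I expect the only real friction to be in the positive subcases: exhibiting and recovering the witnessing replacement of the fresh flattening variables and verifying it meets the hypotheses of Lemma~\ref{lem:1}, and being careful in the apartness subcase that the intuitionistic reading of $\neq$ at world $\t$ still coincides with the world-$w$ semantics of $\ap$ --- which it does precisely because flattened terms evaluate to world-independent Herbrand constants.
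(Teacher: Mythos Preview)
Your proof is correct and follows essentially the same approach as the paper's: a case split on the shape of $L$, with Lemma~\ref{lem:1} doing the inductive work on the flattening formulas $\Phi$, the correspondence of interpretations handling the predicate part, and Proposition~\ref{prop:neg} reducing the negative case to world~$\t$. Your treatment is in fact more complete than the paper's, which handles only the predicate-atom, equality, and negation cases explicitly and omits the apartness subcase $t_1\ap t_2$; your observation that the flattened terms $\tr{t_i}[\rho]$ are Herbrand constants, so that $\tr{t_1}\neq\tr{t_2}$ collapses to a world-independent syntactic inequality, is exactly the point needed there and is not spelled out in the paper.
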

\begin{proof}
Depending on the form of $L$ we have:
\begin{enumerate}
\item If $L$ is some atom $p(\V{t})$, then $\tr{I},w \models \tr{L}$ means that for some substitution $\rho$ of variables in $\tr{subterms(\V{t})}$:  $\tr{I},w,\rho \models  p(\tr{\V{t}})$ and $\tr{I},w,\rho \models \Phi(\V{t})$. By Lemma~\ref{lem:1}, the second conjunct is equivalent to $I,w,\rho \models E \ \V{t}$ and $I,w,\rho \models \tr{t}=t$ for any subterm $t$ of $\V{t}$ (and so of $L$), and in particular $I,w,\rho \models \tr{\V{t}}=\V{t}$. But this means that $\tr{I},w,\rho \models  p(\tr{\V{t}})$ is equivalent to $I,w,\rho \models p(\V{t})$ by the correspondence of $I$ and $\tr{I}$. Since $p(\V{t})$ implies $E \ \V{t}$ we can remove the latter and, as a result, the original condition $\tr{I},w,\rho \models \tr{L}$ is equivalent to $I,w,\rho \models p(\V{t})$ and $I,w,\rho \models \tr{t}=t$ for any subterm $t$ of $L$. As $p(\V{t})$ does not contain variables in $\tr{subterms(\V{t})}$, the previous conditions are equivalent to: $I,w \models p(\V{t})$ and there exists some $\rho$ for which $I,w,\rho \models \tr{t}=t$. But as $I,w \models p(\V{t})$ means that $p(\V{t})$ is defined in $I,w$, the existence of a substitution $\rho$ for variables in $\tr{subterms(\V{t})}$ that satisfies $I,w,\rho \models \tr{t}=t$ for any subterm $t$ of $L$ is guaranteed, and so, is a redundant condition that can be removed.

\item If $L$ has the form $t_1=t_2$ then the proof follows similar steps to case 1.

\item If $L$ has the form $\neg A$, then $\tr{I},w \models \neg \tr{A}$ is equivalent to $\tr{I},\t \not\models \tr{A}$. Applying the proof for cases 1 and 2 to atom $A$, this is equivalent to $I,\t\not\models A$ that is further equivalent to $I,w \models \neg A$.
\end{enumerate}
\end{proof}

Obviously, Lemma~\ref{lem:lits} directly implies that $I,w \models B$ is equivalent to $\tr{I},w \models \tr{B}$.

\begin{lemma}
\label{lem:rules}
$\tr{I},w \models \Gamma(r)$ iff $I,w \models r$.
\end{lemma}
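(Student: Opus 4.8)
The plan is to prove Lemma~\ref{lem:rules} by a case analysis on the syntactic form of the head $H$ of the rule $r = H \IF B$, matching the four clauses in the definition of $\Gamma(r)$ (Translation of rules) against the four head forms allowed by Definition~\ref{def:flprule}. In every case the body contribution is handled uniformly: by Lemma~\ref{lem:lits} (and the remark following it) we already know $I,w \models B$ iff $\tr{I},w \models \tr{B}$, so the work reduces to showing that the head satisfaction conditions match up, modulo the $\Phi(\cdot)$ definedness formulas that the translation threads through the body.

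First I would unfold the derived operators $\IF$ and $\is$ using the definitions from Section~\ref{sec:der}, so that each FLP-rule becomes an ordinary implication whose consequent carries explicit $E\,t \rightarrow A$ guards for the structural arguments. Then, case (i), $H = p(\V{t})$: here $r$ abbreviates $B \wedge \bigwedge_{t \in Args(p(\V{t}))} E\,t \rightarrow p(\V{t})$, and $\Gamma(r)$ is $p(\tr{\V{t}}) \leftarrow \Phi(\V{t}) \wedge \tr{B}$. Since $Args(p(\V{t})) = \{t_1,\dots,t_n\}$ are exactly the top-level arguments, the guard $\bigwedge_i E\,t_i$ corresponds, via Lemma~\ref{lem:1}, precisely to $\tr{I},w,\rho \models \Phi(\V{t})$ together with $\tr{\V{t}} = \V{t}$; combined with the correspondence of $I$ and $\tr{I}$ on predicate atoms, $p(\tr{\V{t}})$ holding under $\rho$ matches $p(\V{t})$ holding in $I$. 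Case (ii), $H = \bot$, is immediate from the body equivalence. Case (iii), $H = f(\V{t}) \is t'$, unfolds to $B \wedge \bigwedge_{t \in Args(f(\V{t})=t')} E\,t \;\wedge\; E\,t' \rightarrow f(\V{t})=t'$; note $Args$ of the equality atom goes down one level, yielding $Args(f(\V{t})) \cup Args(t') = \{t_1,\dots,t_n\} \cup Args(t')$, which is again exactly what $\Phi(\V{t}) \wedge \Phi(t')$ captures, while $holds\_f(\tr{\V{t}},\tr{t'})$ corresponds to $f(\V{t})=t'$ by the $holds\_f$ clause of the correspondence.

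Case (iv), the choice head $f(\V{t}) \in \{X \mid \varphi(X)\}$, is the one I expect to be the main obstacle and would treat most carefully. Here I would expand the head via \eqref{f:set1}–\eqref{f:set2} into the conjunction $\forall X(\varphi(X) \rightarrow f(\V{t})=X \vee f(\V{t})\neq X)$ and $\neg\exists X(\varphi(X) \wedge f(\V{t})=X) \rightarrow \bot$, each relativised to the body $B$ and the definedness of $\V{t}$, and show these match \eqref{f:bb1} and \eqref{f:bb2} respectively. The delicate points are: the universal quantifier over the choice variable $X$ ranges over $Terms(\C)$ on the FLP side, which matches the free-variable $X$ in \eqref{f:bb1} once one recalls the semantics of $\forall$ in $\SQHTF$ and the $HT$-satisfaction of a rule with a free variable; the interaction of $\tr{\varphi(X)}$ (which may itself introduce fresh existentially quantified variables from Definition~\ref{def:lit}) with the outer quantification needs the variable-freshness proviso stated in the translation; and the disjunction $f(\V{t})=X \vee f(\V{t})\neq X$ must be shown equivalent, under the correspondence, to $holds\_f(\tr{\V{t}},X) \vee \neg holds\_f(\tr{\V{t}},X)$ — this uses that $f(\V{t})\neq X$ abbreviates $\neg(f(\V{t})=X)$ together with Proposition~\ref{prop:neg} and the fact that $=$ on defined Herbrand terms behaves as syntactic identity in every $J^w$. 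For \eqref{f:bb2} I would push the outer negation through using Proposition~\ref{prop:neg}, reducing to a claim at the world $\t$ about $\exists X(\varphi(X) \wedge f(\V{t})=X)$ versus $\exists X(holds\_f(\tr{\V{t}},X) \wedge \tr{\varphi(X)})$, which again follows from Lemma~\ref{lem:lits} applied to the conjunction of literals $\varphi(X)$ and the $holds\_f$ correspondence. Throughout, I would be careful that the definedness guard $\Phi(\V{t})$ appears on both sides so that when some $t_i$ is undefined in $I,w$ the FLP-rule is vacuously satisfied exactly when the translated rule's body is false; this is the payoff of routing $Args$ through the $\IF$/$\is$/choice definitions in the first place.
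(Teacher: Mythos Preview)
Your proposal is correct and follows essentially the same route as the paper's proof: a case split on the head form, unfolding of the derived operators $\IF$ and $\is$ (and \eqref{f:set1}--\eqref{f:set2} for the choice head) to expose the $E\,\V{t}$/$E\,t'$ guards, then matching these guards against $\Phi(\V{t})$, $\Phi(t')$ via Lemma~\ref{lem:1} while the body is handled by Lemma~\ref{lem:lits} and the head atom by the $I\leftrightarrow\tr{I}$ correspondence. If anything, your treatment of case~(iv) is more explicit than the paper's, which simply records the two unfolded formulas \eqref{f:aaa1}--\eqref{f:aaa2} and asserts that the argument ``follows similar steps'' to the previous cases.
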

\begin{proof}
If $r=(H \IF B)$, depending on the form of $H$ we have:
\begin{enumerate}
\item If $H=\bot$, is easy to see that $(\bot \IF B)$ is equivalent to $(\bot \leftarrow B)$. Then, $\tr{I},w \models \bot \leftarrow \tr{B}$ $\Leftrightarrow$ $\tr{I},\t \not\models \tr{B}$ $\Leftrightarrow$ (by Lemma~\ref{lem:lits}) $I,\t \not\models B$ $\Leftrightarrow$ $I,w \models \bot \leftarrow B$.

\item If $H$ is like $p(\V{t})$, then $p(\V{t}) \IF B$ is equivalent to $p(\V{t}) \leftarrow B \wedge E \ \V{t}$. Then, $\tr{I},w \models p(\tr{\V{t}}) \leftarrow \Phi(\V{t}) \wedge \tr{B}$ $\Leftrightarrow$ for all $w'\geq w$: if $\tr{I},w' \models \Phi(\V{t}) \wedge \tr{B}$ then $\tr{I},w' \models p(\tr{\V{t}})$ . Let us call (A) to this condition. By Lemma~\ref{lem:lits}, $\tr{I},w' \models \tr{B}$ is equivalent to $I,w' \models B$. Now note that rules are universally quantified. Take any replacement $\rho$ of variables in $\tr{subterms(\V{t})}$. By Lemma~\ref{lem:1}, $\tr{I},w',\rho \models \Phi(\V{t})$ is equivalent to $I,w',\rho \models E \ \V{t}$
and $I,w',\rho \models \tr{t'}=t'$ for any $t' \in subterms(\V{t})$. If this holds, $\tr{I},w',\rho \models p(\tr{\V{t}})$, which coincides with $I,w',\rho \models p(\tr{\V{t}})$, is equivalent to $I,w',\rho \models p(\V{t})$. To sum up, (A) is equivalent to: for all $w'\geq w$, if $I,w',\rho \models B \wedge E \ \V{t}$ then $I,w',\rho \models p(\V{t})$ for any replacement $\rho$. But this is the same than $I,w \models p(\V{t}) \leftarrow B \wedge E \ \V{t}$.

\item If $H$ has the form $f(\V{t}) \is t'$, we may first observe that $(H \IF B)$ is equivalent to $f(\V{t})=t' \leftarrow E \ \V{t} \wedge E \ t' \wedge B$. Then, $\tr{I},w \models holds\_f(\tr{\V{t}},\tr{t'}) \leftarrow \Phi(\V{t}) \wedge \Phi(t') \wedge \tr{B}$ is equivalent to, for any world $w'\geq w$ and any replacement of variables $\rho$: if $\tr{I},w',\rho \models \Phi(\V{t}) \wedge \Phi(t') \wedge \tr{B}$ then $\tr{I},w',\rho \models holds\_f(\tr{\V{t}},\tr{t'})$. By Lemmas~\ref{lem:1} and~\ref{lem:lits}, the antecedent is equivalent to $I,w',\rho \models E \ \V{t} \wedge E \ t' \wedge B$ plus $I,w',\rho \models \tr{k}=k$ for each $k \in subterms(\V{t} \cdot t')$. On the other hand, $\tr{I},w',\rho \models holds\_f(\tr{\V{t}},\tr{t'})$ is equivalent, by correspondence of $I$ and $\tr{I}$, to $I,w',\rho \models f(\tr{\V{t}})=\tr{t'}$ and this, in presence of the equivalent condition for the antecedent we obtained before, is equivalent to $I,w',\rho \models f(\V{t})=t'$. The rest of the proof follows as in the previous case.

\item If $H$ has the form $f(\V{t}) \in \{X \ | \ \varphi(X) \} $ then, after some simple transformations, it can be checked that $(H \IF B)$ is equivalent to the conjunction of the formulas:
\begin{eqnarray}
f(\V{t})=X \vee \neg f(\V{t})=X \leftarrow \varphi(X) \wedge E \ \V{t} \wedge B \label{f:aaa1} \\
\bot \leftarrow \neg \exists X (\varphi(X) \wedge f(\V{t})=X) \wedge      E \ \V{t} \wedge B \label{f:aaa2}
\end{eqnarray}
The proof for this case is tedious, but follows similar steps to the previous two cases. By analogy, it is not difficult to see that $I,w  \models (\ref{f:aaa1})$ iff $\tr{I},w \models (\ref{f:bb1})$ and that $I,w  \models (\ref{f:aaa2})$ iff $\tr{I},w \models (\ref{f:bb2})$.
\end{enumerate}
\end{proof}

\begin{proof}[Proof of Theorem~\ref{th:correct}]
The proof directly follows from Lemma~\ref{lem:rules}.
\end{proof}

For the proof of Theorem~\ref{th:safe} we will show that safety is preserved for the first step of the translation, that is, when the resulting program contains double negation and existential quantifiers in the rule bodies. To this aim, we recall below the definition of safety for rules of this form extracted from~\cite{Cab09}. 

\begin{definition}[Safe rule]\label{def:safe2}
A rule $r: H \leftarrow B$ is said to be safe when both:
\begin{itemize}
\item[a)] Any free variable occurring in $r$ also occurs free and restricted in $\beta$.
\item[b)] For any condition $\exists x \ \varphi$ in $B$, $x$ occurs free and restricted in $\varphi$.\qed
\end{itemize}
\end{definition}

\noindent where the definition of restricted variable in a conjunction of literals is Definition~\ref{def:rv}, but only the applicable items~1 and~2, that do not deal with functions. Note that, for free variables, the above condition means that unrestricted variables cannot occur in the head or negated in the body.

\begin{lemma}\label{lem:rv}
If $X$ is restricted in an FLP-rule conjunction of literals $B$, then $X$ is restricted $B^*$.
\end{lemma}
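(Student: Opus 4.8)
The plan is to prove the statement by a case analysis on the four clauses of Definition~\ref{def:rv} by which $X$ can be restricted in $B$, organized as an induction on the length of a restriction derivation so that the recursive clause (item~4) may appeal to an induction hypothesis, with items~1--3 as base cases. First I would record two easy preliminary facts. (a) The variables introduced by the term and literal translations are, by construction, fresh and hence disjoint from the variables already occurring in $B$, so $X$ still occurs as a free variable of $\tr B$. (b) Since $\tr\Sigma=\tuple{\C,\emptyset,\tr\P}$ has no evaluable function symbols and $\C=\C_0$, every term occurring in $\tr B$ is a constant or a variable; hence the only way a variable can be restricted in $\tr B$ is by one of the two function-free clauses of Definition~\ref{def:rv} -- being an argument of a predicate atom, or occurring in an equality atom $X=Y$ / $Y=X$ whose other side is restricted by a \emph{different} positive literal -- and restrictedness under these two clauses is clearly monotone under adding conjuncts.

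The technical heart is a sub-lemma, proved by structural induction on terms: \emph{if a variable $X$ occurs in a non-variable term $t$ -- necessarily $t=f(\V s)$ with $f$ evaluable and $\V s$ nonempty, since $\C=\C_0$ forces any compound term to be headed by an evaluable function -- then $X$ is restricted in the conjunction of atoms $\Phi(t)$.} Indeed $\Phi(t)=\Phi(s_1)\wedge\cdots\wedge\Phi(s_m)\wedge holds\_f(\tr{s_1},\dots,\tr{s_m},X_t)$; if $X$ is one of the $s_j$ then it occurs as an argument of the predicate atom $holds\_f(\dots)$, and otherwise $X$ sits inside some non-variable $s_j$ of smaller height, so the induction hypothesis yields $X$ restricted in $\Phi(s_j)$, which is a subconjunction of $\Phi(t)$.

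With the sub-lemma in hand the cases are routine. (1) If $A=p(\V t)$, then either $X$ is a top-level argument of $p$, hence an argument of $p(\tr{\V t})$ in the matrix of $\tr A$, or $X$ lies inside a compound argument $t_i=f(\dots)$ whose $\Phi(t_i)$ appears in that matrix, and we invoke the sub-lemma. (2) If $A$ contains a subterm $f(\V t)$ with $X$ directly among $\V t$, then $\Phi(f(\V t))$ -- hence the atom $holds\_f(\tr{\V t},X_{f(\V t)})$, with $X$ among its arguments -- is a conjunct of the matrix of $\tr A$. (3) If $A$ is $f(\V t)=X$ (or symmetrically), then the matrix of $\tr A$ contains both $holds\_f(\tr{\V t},X_{f(\V t)})$, which restricts $X_{f(\V t)}$, and the equality $X_{f(\V t)}=X$, so $X$ is restricted via the equality clause. (4) If $A$ is $X=Y$ (or $Y=X$) with $Y$ restricted in $B$ by a different positive literal $A'$, then -- both sides being variables -- $\tr A$ is literally the atom $X=Y$, and by the induction hypothesis $Y$ is restricted in $\tr B$ via literals originating from $\tr{A'}$, which are conjuncts distinct from $\tr A$; hence $X$ is restricted in $\tr B$ via the equality clause. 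The main obstacle is exactly the bookkeeping in cases (3) and (4): one must verify that the literal witnessing restrictedness of the partner variable lies in a conjunct of $\tr B$ genuinely different from the equality atom itself. This follows because the literal translation acts independently on each literal of $B$, so the translations of distinct literals of $B$ contribute distinct conjuncts of $\tr B$ (and a restriction derivation is well founded, so no spurious cycle through symmetric equalities arises). The other, milder, technical point is the inner structural induction of the sub-lemma, which is what actually propagates restrictedness out of arbitrarily deep functional nesting in cases (1) and (2).
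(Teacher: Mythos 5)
Your proof is correct and follows essentially the same route as the paper's: a case analysis on the four clauses of Definition~\ref{def:rv}, exhibiting the same witnessing atoms in $\tr{B}$ ($p(\tr{\V{t}})$, $holds\_f(\tr{\V{t}},Y)$, the pair $Y=X$ and $holds\_f(\tr{\V{t}},Y)$, and the preserved equality $X=Y$ together with a recursive appeal for its partner). The only difference is that you make explicit what the paper leaves implicit -- the well-founded induction needed for the recursive clause~4 and the propagation of restrictedness through nested functional terms -- which is a refinement of the same argument rather than a different one.
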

\begin{proof}
Following Definition~\ref{def:rv} we have four cases:
\begin{enumerate}
\item If $X$ was restricted by some $p(\V{t})$ then $B^*$ will contain a corresponding positive atom $p(\V{t}^*)$ where functional terms have been replaced by auxiliary variables but $X$ still belongs to the tuple $\V{t}^*$.
\item If $X$ was in a term $f(\V{t})$ inside a positive atom in $B$, then $X$ will be included in the corresponding atom $holds\_f(\V{t}^*,Y)$ that will also be positive in $B^*$.
\item It $X$ was in a positive atom $f(\V{t})=X$ (analogously for $X=f(\V{t})$) then the translation will contain an auxiliary variable $Y$ and the pair of positive atoms $Y=X$ and $holds\_f(\V{t}^*,Y)$.
\item If $X$ was in a positive atom $X=Y$ (resp. $Y=X$) and $Y$ was restricted by another different atom, note that $X=Y$ will be preserved in $B^*$ and that we can apply the previous items for concluding that $Y$ is restricted in $B^*$.
\end{enumerate}
\end{proof}

\begin{lemma}\label{lem:1step}
If $\Pi$ is safe then $\Gamma(\Pi)$ is safe.\qed
\end{lemma}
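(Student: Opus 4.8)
The plan is to verify the safety condition of Definition~\ref{def:safe2} for every conjunct of $\Gamma(\Pi)$. Recall that $\Gamma(\Pi)$ is the union, over all $r\in\Pi$, of the conjuncts of $\Gamma(r)$ together with one instance of the uniqueness schema \eqref{f:unique} for each evaluable function. That schema is safe outright: its head is $\bot$, it contains no existential subformula, and each of its variables $\V{X},V,W$ occurs in a positive predicate atom ($holds\_f(\V{X},V)$ or $holds\_f(\V{X},W)$), hence is restricted by item~\ref{def:rv_it1} of Definition~\ref{def:rv}. So the real task is to show that, for every safe FLP-rule $r=(H\IF B)$, each conjunct of $\Gamma(r)$ is safe.

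First I would record a reusable observation about the variables introduced by flattening: for any term $t$, every fresh variable occurring in $\tr{t}$ occurs as the last argument of a positive atom $holds\_g(\dots)$ inside $\Phi(t)$. Two consequences follow. (a) Wherever $\Phi(\V{t})$ or $\Phi(t')$ appears \emph{positively} in the body of a conjunct of $\Gamma(r)$ — which happens in all four clauses of the translation of rules — the fresh variables it contributes are restricted there. (b) Each translated atom $\tr{A}$ has the shape $\exists\V{X}\,(\psi)$ with $\psi$ a conjunction in which every bound $X_i$ occurs in a positive $holds$ atom, so condition~(b) of Definition~\ref{def:safe2} is automatically met for the existentials coming from literal translation; this covers the existentials in $\tr{B}$ and in the translated choice condition alike, and it is insensitive to whether the surrounding literal is negated. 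The only variables of $\Gamma(r)$ not already handled are those already present in $r$, and for these the key tool is Lemma~\ref{lem:rv}: a variable restricted in $B$ remains restricted in $\tr{B}$.

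Then I would run the case analysis on the form of $H$, following the translation of rules. If $H=\bot$, then $\Gamma(r)=(\bot\leftarrow\tr{B})$ has no head occurrences; a variable of $r$ restricted in $B$ stays restricted in $\tr{B}$ by Lemma~\ref{lem:rv}, and one not restricted in $B$ cannot, by Definition~\ref{def:safe}, occur in the scope of negation in $B$, so all its occurrences in $\Gamma(r)$ are positive. If $H=p(\V{t})$, the head of $\Gamma(r)$ is $p(\tr{\V{t}})$; its variables are either fresh (restricted via the body occurrence of $\Phi(\V{t})$, by observation~(a)) or top-level arguments of $\V{t}$, and the latter must be restricted in $B$ — otherwise Definition~\ref{def:safe} would forbid them from the head — hence they are restricted in $\tr{B}$. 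The assignment case $H=f(\V{t})\is t'$ is identical, with head $holds\_f(\tr{\V{t}},\tr{t'})$ and $\Phi(\V{t})\wedge\Phi(t')$ supplying the restrictions. For the choice head $H=f(\V{t})\in\{X\mid\varphi(X)\}$ I would treat \eqref{f:bb1} and \eqref{f:bb2} separately: in \eqref{f:bb1} the only head variable beyond the fresh ones is $X$, which is restricted in $\varphi(X)$ by Definition~\ref{def:safe} and hence in the translated choice condition by Lemma~\ref{lem:rv}; in \eqref{f:bb2} the head is $\bot$, the existential $\exists X\,(holds\_f(\tr{\V{t}},X)\wedge\dots)$ is licit since $X$ sits in the positive atom $holds\_f(\tr{\V{t}},X)$, and the free variables are those of $\tr{\V{t}}$ (fresh ones restricted by the \emph{outer} copy of $\Phi(\V{t})$) and of $\tr{B}$, the renaming convention of the translation guaranteeing that $X$ is disjoint from $B$.

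I expect the choice case to be the main obstacle, since it forces one to keep three things straight simultaneously: the fate of the choice variable $X$, the renaming that separates $X$ from $B$, and the nested existential of \eqref{f:bb2} whose scope must itself be checked. A subtler point worth spelling out carefully is that Definition~\ref{def:safe} is deliberately lenient about variables buried inside \emph{head} functional terms (they need not be restricted in $B$); the proof must show that this leniency is exactly offset by flattening, which relocates every such variable into a body occurrence of some $\Phi(\cdot)$ where it becomes restricted. Once these points are settled, the remaining bookkeeping is routine and Lemma~\ref{lem:1step} follows.
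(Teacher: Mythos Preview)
Your argument is correct and uses the same ingredients as the paper's proof—Lemma~\ref{lem:rv} for variables already restricted in $B$, the positive structure of $\Phi(\cdot)$ for the fresh flattening variables, and a separate treatment of the choice variable in \eqref{f:bb1}/\eqref{f:bb2}—but you organize the case analysis differently. The paper splits on the \emph{kind of variable} (original non-choice, original choice, auxiliary), handling all head shapes uniformly within each case; you split on the \emph{head shape} and, within each, sort out which variables land where. Your decomposition makes the role of $\Phi(\V{t})$ in the body of each translated rule more visible (your observation~(a)) and it forces you to check the uniqueness schema~\eqref{f:unique} explicitly, which the paper leaves implicit; the paper's decomposition, in turn, makes it slightly cleaner to argue the ``unrestricted but safe'' subcase once and for all rather than revisiting it per head. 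Neither buys anything essential over the other; the content is the same.
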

\begin{proof}
We will have two types of variables in $\Gamma(\Pi)$: the original ones in $\Pi$ plus the auxiliary ones introduced in the translation of functional terms. We will show their safety in $\Gamma(\Pi)$ for each case, further distinguishing between choice and non-choice variables, when they belonged to $\Pi$.
\begin{itemize}
\item If $X$ is a variable in some rule $r: H \IF B$ in $\Pi$ and is not a choice variable, we may have that it was restricted in $B$ or not. If it was restricted in $B$, from Lemma~\ref{lem:rv} and the fact that $B^*$ belongs to the bodies of all rules in $\Gamma(\Pi)$, we conclude that $X$ it is also restricted in those rule bodies, and so, $X$ is safe in $\Gamma(r)$. If $X$ was not restricted in the body, as it was safe, it was not in the scope of negation in $\Pi$ and was not $t'$ or one of $\V{t}$ in any of the possible heads in Definition~\ref{def:flprule}. Following the translation, it is easy to see that a variable can only end being in the scope of negation if it already occurred in a negative literal in the body of $r$ or it was one of the arguments in $\V{t}$ in a head of the form $f(\V{t})\in \{Y \ | \ \varphi(Y)\}$, but none of these cases hold. On the other hand, it can also be checked that a variable can end in a head of $\Gamma(r)$ only when it was an element in $\V{t}$ in head like $p(\V{t})$, a head like $f(\V{t})\in \{Y \ | \ \varphi(Y)\}$, or a head like $f(\V{t})=t'$, or $X$ was $t'$ in the last case. But again, none of these cases hold. As a result, $X$ does not occur (free) in the heads of rules in $\Gamma(\Pi)$ nor negated in their bodies.
\item If $X$ is a choice variable in $\Pi$ for some rule with head $f(\V{t})\in \{X \ | \ \varphi(X)\}$, since it was safe, we know that it is restricted in $\varphi(X)$. From Lemma~\ref{lem:rv} we conclude that $X$ is restricted in $\varphi(X)^*$. Now, $\Gamma(\Pi)$ contains the rules \eqref{f:bb1} and \eqref{f:bb2}. In the case of \eqref{f:bb1}, as $\varphi(X)^*$ belongs to the body without being inside an existential quantifier, we immedieatelu conclude that $X$ is restricted in the body, and so is safe in that rule. For \eqref{f:bb2}, we have that $X$ ends being existentially quantified, inside a formula $\exists X ( holds\_f(\V{t}^*,X) \wedge \varphi(X)^* )$, but as $X$ is free and restricted inside the quantified formula, we conclude again that it is safe in the rule.
\item If $X$ is an auxiliary variable, it can only be one of the auxiliary variables $\V{X}$ in Definition~\ref{def:lit} for translation of literals. Note that, when we translate a positive body literal $A$ into $A^*$, the latter will be included in the final rule bodies, whereas it has the form of $\exists \V{X} ( \alpha(\V{X}) )$ and, this is crucial, that $\alpha(\V{X})$ results from translating terms in $A$ and is always a conjunction of positive literals. Thus $\V{X}$ will be restricted in $\alpha(\V{X})$ and thus, these variables will be safe in the result. The same happens for negative literals $\neg A$ and their translation $\neg \exists \V{X} ( \alpha(\V{X}) )$, since safety for existentially quantified variables only requires that they are restricted inside the quantified formula.
\end{itemize}
\end{proof}

\begin{proof}[Proof of Theorem~\ref{th:safe}]
It follows from Lemma~\ref{lem:1step} for $\Gamma(\Pi)$, resulting from the first step of the translation, and from Theorem~7 in~\cite{Cab09} for the second step that eventually yields $\tr{\Pi}$.
\end{proof}

\begin{proof}[Proof of Lemma~\ref{lem:grnd}]
First, we observe that the grounding of FASP-rules yields the same result in $grnd(\Pi)$ and $grnd(\hat{\Pi})$. This is because, for any rule $\alpha \leftarrow \beta$ in $\Pi$, we get a rule \eqref{f:varrange} in $\hat{\Pi}$. But then, after grounding, we can remove those rules in $\hat{\Pi}$ for which $X$ has been replaced by some $c$ not in the range of $X$, since in those cases, there is no head $\tau(c)$ in $grnd(\hat{\Pi})$. Similarly, when $c$ belongs to the range of $X$, $\tau(c)$ will be a fact in $\hat{\Pi}$, and so, it can be removed from the rule body, so that we obtain the same result as directly grounding $\alpha \leftarrow \beta$ in $\Pi$. 

Now, from Lemma~\ref{lem:totfun} we get that $\sigma^h$ and $\sigma$ coincide for the evaluation of functions. Thus, we can replace any functional term $f(\V{c})$ in $grnd(\hat{\Pi})$ by its value $\sigma(f(\V{c}))$. On the other hand, from Proposition~\ref{prop:neg}, we can replace any $\neg \varphi$ such that $I,t \not\models \varphi$ by $\bot$, and any one such that $I,t \not\models \varphi$ by $\top$. Finally, as all function terms in $grnd(\hat{\Pi})$ refer to arguments in the corresponding function domain, equality is always applied to defined terms, and so, it has the same interpretation in $S^h$ and $S$. As a result, $I \models t_1=t_2$ iff $I \models \neg\neg (t_1=t_2)$ and we can replace equality by the corresponding truth constant, as we did for negative literals.
\end{proof}

\begin{proof}[Proof of Theorem~\ref{th:lw}]
For the left to right direction, assume $S$ is answer set for $grnd(\Pi)$ but $(S,S)$ is not equilibrium model of $grnd(\hat{\Pi})$. This means there exists some smaller model $I=(S^h,S)$ of $grnd(\hat{\Pi})$, $S^h=(\sigma^h,A^h)$ that, from Lemma~\ref{lem:totfun}, satisfies $\sigma^h=\sigma$ and for which $A^h \subset A$. From Lemma~\ref{lem:grnd}, $I \models grnd(\hat{\Pi})$ is equivalent to $I \models grnd(\Pi)^S$. Now, as $grnd(\Pi)^S$ does not contain function symbols or negation, it is easy to see that the latter is equivalent to $A^h \models grnd(\Pi)^S$ in propositional logic. But the latter contradicts the fact that $S=(\sigma,A)$ is answer set of $grnd(\Pi)$.

For the right to left direction, assume $(S,S)$ is equilibrium model of $grnd(\hat{\Pi})$ but not an answer set of $grnd(\Pi)$. The latter means there exists some $A' \subset A$ for which $A' \models grnd(\Pi)^S$. But then, we can build the $\SQHTF$ interpretation $I=(S^h,S)$ with $S^h=(\sigma,A')$. As $grnd(\Pi)^S$ does not contain negation or function symbols, $A' \models grnd(\Pi)^S$ implies $I \models grnd(\Pi)^S$ and, in its turn, by Lemma~\ref{lem:grnd}, this is equivalent to $I \models grnd(\hat{\Pi})$. But since $I$ is strictly smaller than $(S,S)$, we get a contradiction with the equilibrium condition for the latter.
\end{proof}

\end{document}